\documentclass[10pt]{article}
\usepackage{natbib}
\usepackage{graphicx}
\usepackage{amsmath}
\usepackage{authblk}
\usepackage{amsfonts}
\usepackage{amsthm}
\usepackage{enumitem}
\usepackage[font={footnotesize}]{caption}
\usepackage{framed}
\usepackage{url}
\usepackage{placeins}
\usepackage[paper=a4paper,left=30mm,right=30mm,top=25mm,bottom=25mm]{geometry}

\renewcommand{\P}{\mathbb{P}}
\renewcommand{\d}{\mathrm{d}}

\DeclareMathOperator{\argmin}{\mathrm{argmin}}
\DeclareMathOperator{\E}{\mathbb{E}}

\DeclareMathOperator{\R}{\mathbb{R}}

\DeclareMathOperator{\B}{\mathcal{B}}

\DeclareMathOperator{\bfZ}{\mathbf{Z}}

\DeclareMathOperator{\bfG}{\mathbf{G}}
\DeclareMathOperator{\bfF}{\mathbf{F}}
\DeclareMathOperator{\bfU}{\mathbf{U}}
\DeclareMathOperator{\bfI}{\mathbf{I}}

\DeclareMathOperator{\Pois}{\mathrm{Pois}}
\DeclareMathOperator{\PRM}{\mathrm{PRM}}

\DeclareMathOperator{\bfX}{\mathbf{X}}
\DeclareMathOperator{\bfY}{\mathbf{Y}}
\DeclareMathOperator{\bfE}{\mathbf{E}}

\DeclareMathOperator{\bfH}{\mathbf{H}}

\DeclareMathOperator{\bfN}{\mathbf{N}}
\DeclareMathOperator{\N}{\mathbb{N}}
\DeclareMathOperator{\bfeta}{\mathbf\eta}
\DeclareMathOperator{\bfw}{\mathbf w}

\renewcommand{\vec}{\mathrm{vec}}

\renewcommand{\(}{\left(}
\renewcommand{\)}{\right)}
\renewcommand{\[}{\left[}
\renewcommand{\]}{\right]}

\renewcommand{\P}{\mathbb{P}}
\renewcommand{\d}{\mathrm{d}}
\newcommand{\PA}{\mathrm{PA}}

\newcommand{\AN}{\mathrm{AN}}

\newtheorem{definition}{Definition}
\newtheorem{theorem}{Theorem}

\newtheorem{remark}{Remark}
\newtheorem{proposition}{Proposition}
\newtheorem{algorithm}{Algorithm}

\theoremstyle{definition}

\title{Hawkes graphs\thanks{This work was
supported by RiskLab Zurich and the Swiss Finance Institute.
}} 
\date{This version: \today}
\author{
Paul Embrechts, Matthias Kirchner\thanks{
 Corresponding author: matthias.kirchner@math.ethz.ch}}
\affil{RiskLab, 
Department of Mathematics,
ETH Zurich,
R{\"a}mistrasse 101,
8092 Zurich,
Switzerland
}
\begin{document}

\maketitle
\begin{abstract}
This paper introduces the Hawkes skeleton and the Hawkes graph. These objects summarize the branching structure of a multivariate Hawkes point process in a compact, yet meaningful way. We demonstrate how graph-theoretic vocabulary (`ancestor sets', `parent sets', `connectivity', `walks', `walk weights', \dots) is very convenient for the discussion of multivariate Hawkes processes. For example, we reformulate the classic eigenvalue-based subcriticality criterion of multitype branching processes in graph terms.
Next to these more terminological contributions, we show how the graph view may be used for the specification and estimation of Hawkes models from large, multitype event streams. Based on earlier work, we give a nonparametric statistical procedure to estimate the Hawkes skeleton and the Hawkes graph from data. We show how the graph estimation may then be used for specifying and fitting parametric Hawkes models. Our estimation method avoids the a priori assumptions on the model from a straighforward MLE-approach and is numerically more flexible than the latter. Our method has two tuning parameters: one controlling numerical complexity, the other one controlling the sparseness of the estimated graph. A simulation study confirms that the presented procedure works as desired. We pay special attention to computational issues in the implementation. This makes our results applicable to high-dimensional event-stream data, such as dozens of event streams and thousands of events per component.  
\end{abstract}

\section{Introduction}
This paper discusses the specification and estimation of multivariate Hawkes point process models from large, multitype event-stream datasets such as neural spike-trains, internet search-queries, or limit-order-book data in high-frequency finance. Our approach uses the notion of a Hawkes skeleton and a Hawkes graph\footnote{Note that the term `{Hawkes graph}' has already been introduced for the graph representation of a specific finite group; see \citet{hawkes68}. Neither the author of the latter paper, \emph{T.}\ Hawkes, nor its content has anything to do with our notion of a Hawkes graph.
}. We demonstrate how these concepts are fertile beyond statistical estimation.

The Hawkes process was introduced in \cite{hawkes71a, hawkes71b} as a stationary point process on $\R$ whose points are assigned to a finite number of types.
The (stochastic) intensity of a Hawkes process depends on the past of the process itself: given the occurrence of an event, the intensities---the expected mean number of events per time unit and event type---typically jump upwards and then decay. This structure can alternatively be represented as a multitype branching-process with immigration; see \cite{hawkes74}. The crucial parameters of a Hawkes model are the \emph{excitement functions} or, emphasizing the branching interpretation, the \emph{reproduction intensities} that govern these self- and crosseffects.
For a textbook reference that covers many aspects of the Hawkes process, see \cite{daley03}. Maximum likelihood estimation of Hawkes processes has been treated in \cite{ogata88} covering calibration issues and introducing a computationally beneficial recursive method for the exponential decay case. \cite{liniger09} deals especially with the construction of the multivariate and marked case.\\ 
\par In the present paper, we formally introduce the \emph{Hawkes graph}. The Hawkes graph summarizes the branching structure of a multitype Hawkes point process as a directed graph with weighted vertices and edges. The vertices represent the possible event-types of the corresponding Hawkes process; an edge $(i,j)$ denotes nonzero excitement from event-type $i$ to event-type $j$. The vertex weights are the corresponding immigration intensities; the weight of an edge $(i,j)$ is the expected number of type-$j$ children events that an type-$i$ event generates.  The \emph{Hawkes skeleton} is the Hawkes graph  disregarding the weights. The network view on Hawkes processes has been considered in \cite{song13}, \cite{delattre15},  \cite{bacry15}, and \cite{hall16}. The graph terminology is convenient to describe many relevant aspects of multivariate Hawkes processes such as `ancestor and parent sets', `paths', `path weights', `feedback', `cascades', or `connectivity'. The graph representation of a Hawkes process also provides additional theoretical insight. For example, in Theorem~\ref{prop:graph_subcriticality}, we give a graph-based criterion for subcriticality which is equivalent to the usual spectral-radius based criterion on the branching matrix. Furthermore, the graph approach turns out to be helpful for the estimation of multivariate Hawkes processes.

Concerning Hawkes process estimation, we see three main problems with the standard parametric likelihood approach. First of all, it uses many unjustified assumptions on the shape of the reproduction intensities. Secondly, the distribution of the MLE-estimator is (in general) not known. In particular, the likelihood approach does not provide tests to decide whether excitement from one event type to another exists \emph{at all}. Finally, there are numerical issues that make it difficult to apply MLE in a straightforward way with large, high-dimensional event-stream datasets. 
\par Our approach leaves the choice of the excitement functions open to the very last.  We apply an estimation procedure developed in \citet{kirchner16d}. This procedure is based on a limit-representation of the Hawkes process studied in \citet{kirchner16c}: we discretize the original process and interpret it as an autoregressive model of bin-counts. The latter is statistically estimated using conditional least-squares. In this setup, the asymptotic distribution of the resulting estimators can be obtained. This opens the door to testing. Our procedure is numerically more robust than the standard MLE approach. However, for high-dimensional data our procedure cannot be applied in a straightforward manner either. This is why, in combination with the concept of a Hawkes skeleton and graph, we tackle the numerical difficulties by the following three-step algorithm: 
\begin{enumerate}
\item Given a large multitype event-stream dataset, we first apply a specific testing scheme to decide whether there is \emph{any} effect from a specific event type to any other event type. The test result yields the \emph{Hawkes-skeleton estimate}. In this first step, we use a parameter allowing us to tune for a \emph{very coarse discretization}; this keeps the computational complexity under control. Despite the resulting discretization error, this approach typically yields a \emph{superset} of the true edge set. Under the paradigm that the graph of the true underlying multivariate Hawkes model is typically sparse, this estimated superset is still sparse. 
\item In a second step, we estimate the \emph{Hawkes graph given the skeleton estimate}. The Hawkes graph \emph{quantifies} the remaining excitement effects. The sparseness of the estimated Hawkes-skeleton from (i) reduces the complexity of the estimation problem considerably: there are only few excitements left to estimate and there are fewer `explanatory types' per event type, namely the estimated parent sets. Consequently, we may now choose a much finer discretization parameter and thus retrieve more precise edge and vertex weight estimates---including confidence intervals for all estimated values. 
\item As a by-product, the calculations in (ii) yield estimates for the values of the nonzero excitement-functions on a finite equidistant grid. We exploit these estimation results graphically to choose appropriate parametric function-families. Finally, we fit the chosen parametric functions to the corresponding estimates by a non-linear least-squares method. This yields parameter estimates for parametric Hawkes models.
\end{enumerate}
The multistep-procedure described above also works in a high-dimensional setting (such as dozens of event streams and thousands of events per component); the approach can be implemented in a straightforward way. \\

\par The paper is organized as follows: In Section 2, we give definitions and discuss graph attributes that are relevant for the description of multivariate Hawkes processes. In particular, we give results that clarify what kind of information on the Hawkes process a Hawkes graph encodes. In Section 3, we cite earlier results that allow for nonparametric estimation of Hawkes processes. We apply these methods to estimate the Hawkes skeleton and the Hawkes graph. Finally, we show how parametric families for the remaining nonzero reproduction intensities may be specified and calibrated. For an illustration of the new concepts introduced, we present a simulation study in Section 4. In Section 5, we conclude with directions for further research. 

\section{Definitions}\label{definitions}
In this section, we recall the branching construction of a multivariate Hawkes process as well as basic graph terminology. After this, we introduce the Hawkes skeleton as well as the Hawkes graph. The graph representation summarizes the branching structure of a Hawkes process in a compact and insightful manner. 

\subsection{Multivariate Hawkes processes}\label{multivariate_def:hawkes_processes}
Throughout the paper, let $(\Omega, \P, \mathcal{F})$ be a complete probability space rich enough to carry all random variables involved. We give a constructive definition of the Hawkes process that emphasizes the branching structure.  For a similar construction; see \citet{hawkes74} or Chapter 4 in \citet{liniger09}. The building blocks are  Poisson random-measures on $\R$ endowed with the Borel $\sigma$-algebra $\mathcal{B}(\R)$.
\begin{definition}
Let $\lambda: \R \to \R_{\geq 0}$ be a locally integrable function. 
We say that  $M$ is a \emph{Poisson random-measure} on $(\R, \B(\R))$ with \emph{intensity function} $\lambda$  whenever the following two conditions hold:
\begin{enumerate}
\item  $M(B) \sim \Pois\( \int_B \lambda(s)\d s\), \ B\in\B(\R)$.
\item  If $B_1,B_2,\dots,B_n\in\B(\R)$ with $B_i \cap B_j = \emptyset,\, i\neq j$, then $M(B_1), M(B_2), \dots, M(B_n) $ are mutually independent.
\end{enumerate}
We write 
$M\sim \PRM(\lambda \d s)$.
\end{definition}
In the definition above we use the convention that 
$X\sim \Pois\(0\) :\Leftrightarrow X \equiv0,$ a.s. and 
$X\sim \Pois\(\infty\) :\Leftrightarrow X \equiv \infty,$ a.s. \\ 
\par A multitype Hawkes process is a model for the occurrence of events on $\R$, where the events are assigned to a finite number of types. The different event-types are represented as (in general dependent) random counting measures. For each event type, there is an immigration process. Each immigrant event independently generates a family. These families consist of cascades of Poisson random measures. A Hawkes process is the superposition of all such families. We formalize this construction in the definitions below. To emphasize the intuition behind the names of  immigrants, generations, and families, we use the somewhat unusual letters $\bfI$, $\bfG$, and $\bfF$ for the corresponding processes.

\begin{definition}\label{hawkes_family}
Let $d\in\N$ and $[d] := \{1,2,\dots, d\}$.
\begin{enumerate}
\item For $(i,j)\in[d]^2$, define \emph{branching coefficients} $a_{i,j}\geq 0$, \emph{displacement densities} $w_{i,j}$ supported on $\R_{\geq0}$, 
\emph{reproduction intensities} $h_{i,j} := a_{i,j}w_{i,j}$, and \emph{reproduction processes} $\xi^{(i,j)}_{t}(\cdot) := \xi^{(i,j)}(\cdot - t) \sim \PRM(h_{i,j}\d s)$, $t\in\R$, mutually independent over $(i,j,t) \in[d]^2\times\R$. 

\item For $i_0\in[d]$ and $g\in\N_0$, define the $g$-th \emph{generation process} (generated by a type-$i_0$ event at time zero) as the $d$-tuple of random counting measures
$\bfG^{(i_0,g)}:= \(G^{(i_0,g)}_1,\dots,\, G_d^{(i_0,g)}\)$ by
\begin{align}
G^{(i_0,0)}_j(B) &:= 1_{\{j = i_0\}}\delta_{0}(B),\quad B\in\mathcal{B}(\R),\, j\in[d],\nonumber\\
 G^{(i_0,g)}_j(B)&:= \sum\limits_{i = 1}^d \int_{\R}\xi^{(i,j)}_{t}(B)  G^{(i_0,g-1)}_i(\d t),\quad B\in\mathcal{B}(\R),\, j\in[d],\, g\in\N.\label{G_j}
\end{align}
\item For $i_0\in[d]$, define the \emph{Hawkes family} (generated by a type-$i_0$ event at time zero) as the $d$-tuple of random counting measures
$$\bfF^{(i_0)}= \sum_{g\geq 0} \bfG^{(i_0,g)}.$$
\end{enumerate}
\end{definition}
The branching structure of a Hawkes family is encoded in recursion \eqref{G_j}. 
Note that the points of a Hawkes family actually form a \emph{multitype branching random walk}; see \citet{shi15}. The following definition clarifies how the Hawkes family process is related to the prototypic branching process, the Galton--Watson process:
\begin{definition}\label{embedded_generation_process}
For $i_0 \in[d],$ let $\bfF^{(i_0)}$ be a Hawkes family and let $\{\bfG^{(i_0,g)}\}_{g\in\N_0}$ be the corresponding generation processes constructed in Definition~\ref{hawkes_family} above.
For $g\in\N_0$, define 
$$
\bfY^{(i_0)}_g := \(Y^{(i_0)}_{g,1}, Y^{(i_0)}_{g,2}, \dots, Y^{(i_0)}_{g,d}\),  \quad \text{where, for } j\in[d],\quad Y^{(i_0)}_{g,j} :=  G^{(i_0,g)}_j(\R). 
$$
We call $(\bfY^{(i_0)}_g)_{g\in\N_0}$  the \emph{embedded generation process} of the Hawkes family $\bfF^{(i_0)}$.
\end{definition}
The embedded generation process $(\bfY^{(i_0)}_g)$ of a Hawkes family is a multitype Galton--Watson process. A multitype Galton--Watson process models the size of a population with individuals of $d$ types, where each individual is alive during exactly one time unit; see Section~2.3 in \citet{jagers05}. The embedded generation process starts with a single type-$i_0$ individual in generation 0 and, for $g\in\N$, each type-$i$ individual in generation $g-1$ gives offspring to $\mathrm{Pois}(a_{i,j})$ $a_{i,j} = \int h_{i,j} \d t$) type-$j$ individuals of type $j$ in generation $g$. This is why $a_{i,j} ,\,  (i,j)\in[ d]^2,$ are called \emph{branching coefficients} and why the matrix $A:=(a_{i,j})\in\R_{\geq 0}$ is called \emph{branching matrix}. 
\begin{proposition}\label{prop:expectation}
Let $A$ be the branching matrix of Hawkes families $\bfF^{(i_0)},\, i_0\in[d]$, respectively, of the corresponding embedded generation processes $(\bfY^{(i_0)}_g),\, i_0\in[d]$. Then we have that
\begin{align}
\E F^{(i_0)}_j(\R) =\sum_{g\geq 0} \E Y^{(i_0)}_{g,j} < \infty,\quad (i_0, j)\in[d]^2,
\end{align}
if and only if the spectral radius of $A$ is strictly less than 1. In this case, $(1_{d\times d} - A)$ is invertible and $(\E F^{(i_0)}_j(\R))_{(i_0,j)\in[d]^2} =  (1_{d\times d} - A)^{-1}$.
\end{proposition}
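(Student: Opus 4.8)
The plan is to reduce the claim to a computation of the expected generation sizes and then to apply the standard theory of the Neumann series of a nonnegative matrix. Since all measures in the branching construction are nonnegative, Tonelli's theorem yields $\E F^{(i_0)}_j(\R) = \E \sum_{g\geq 0} G^{(i_0,g)}_j(\R) = \sum_{g\geq 0}\E G^{(i_0,g)}_j(\R) = \sum_{g\geq 0}\E Y^{(i_0)}_{g,j}$, the last step being merely the definition of $Y^{(i_0)}_{g,j}$. Hence it suffices to determine $\E G^{(i_0,g)}_j(\R)$, and I would show that it equals $(A^g)_{i_0,j}$.

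I would establish $\E G^{(i_0,g)}_j(\R) = (A^g)_{i_0,j}$ by induction on $g$ via the recursion \eqref{G_j}. The case $g=0$ is immediate, since $G^{(i_0,0)}_j(\R)=1_{\{j=i_0\}}=(A^0)_{i_0,j}$. For the inductive step, evaluate \eqref{G_j} at $B=\R$, condition on $\bfG^{(i_0,g-1)}$, and use that the reproduction processes $\xi^{(i,j)}_t$ feeding generation $g$ are independent of $\bfG^{(i_0,g-1)}$, together with the Poisson mean formula $\E[\xi^{(i,j)}_t(\R)] = \int_\R h_{i,j}(s-t)\,\d s = \int_\R h_{i,j}(s)\,\d s = a_{i,j}$ (which holds since a $\Pois(\mu)$ variable has mean $\mu$, and $h_{i,j}=a_{i,j}w_{i,j}$ with $w_{i,j}$ a probability density). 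This gives $\E G^{(i_0,g)}_j(\R) = \sum_{i=1}^d a_{i,j}\,\E G^{(i_0,g-1)}_i(\R) = \sum_{i=1}^d (A^{g-1})_{i_0,i}\,a_{i,j} = (A^g)_{i_0,j}$. Equivalently, this is the classical fact that the mean-offspring matrix of the multitype Galton--Watson process $(\bfY^{(i_0)}_g)_{g}$ is $A$, so that its generation-$g$ mean vector is $e_{i_0}^\top A^g$.

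Combining the two steps, $\E F^{(i_0)}_j(\R) = \sum_{g\geq 0}(A^g)_{i_0,j}$, a series with nonnegative terms. If the spectral radius of $A$ is strictly less than $1$, the Neumann series $\sum_{g\geq 0} A^g$ converges in every matrix norm, $1_{d\times d}-A$ is invertible (its eigenvalues $1-\lambda$ are nonzero because $|\lambda|<1$ for every eigenvalue $\lambda$ of $A$), and $\sum_{g\geq 0}A^g = (1_{d\times d}-A)^{-1}$; hence all the expectations are finite and $(\E F^{(i_0)}_j(\R))_{(i_0,j)\in[d]^2}=(1_{d\times d}-A)^{-1}$. Conversely, if the spectral radius of $A$ is at least $1$, then $A^g\not\to 0$ (otherwise every eigenvalue of $A$ would have modulus $<1$), so the sum-of-entries norm of $A^g$ does not tend to $0$; therefore $\sum_{g\geq 0}\|A^g\|=\infty$ and, since entries are nonnegative, $\sum_{g\geq 0}(A^g)_{i_0,j}=\infty$ for at least one pair $(i_0,j)$. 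This gives the remaining implication by contraposition.

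The only step that requires genuine care is the independence invoked in the inductive computation: one must verify that, in the branching construction, the reproduction measures attached to the individuals of generation $g-1$ are independent of the entire configuration $\bfG^{(i_0,g-1)}$ produced by generations $0,\dots,g-1$, which is precisely what makes $(\bfY^{(i_0)}_g)_{g}$ a bona fide multitype Galton--Watson process. The clean way to justify this is to re-index the reproduction processes by the individuals of the family (equivalently, by the atoms of the generation measures) rather than by the triple $(i,j,t)$; with that indexing the independence is built in, and everything else is bookkeeping plus the standard matrix-analytic dichotomy for the series $\sum_{g\geq 0} A^g$.
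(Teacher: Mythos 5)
Your proof is correct and follows essentially the same route as the paper's: compute $\E Y^{(i_0)}_{g,j}=(A^g)_{i_0,j}$ by induction on the generation, interchange sum and expectation by Tonelli/Fubini, and invoke the equivalence between convergence of $\sum_{g\geq 0}A^g$ and the spectral radius being strictly less than $1$. The only differences are cosmetic: you prove the matrix-series dichotomy directly (and correctly) where the paper cites a reference, and you explicitly flag the independence of the generation-$g$ reproduction measures from the generation-$(g-1)$ configuration, which the paper leaves implicit.
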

\begin{proof}
Using 
$$
\E \bfY^{(i_0)}_{0} =  \bfY^{(i_0)}_{0} = (0,\dots, 0, \underbrace{1}_{\text{$i_0$-th entry}}, 0, \dots, 0) \text{  and  }\E \bfY^{(i_0)}_{g}=\E \bfY^{(i_0)}_{g-1} A,\, g\in\N,\, i_0\in[d],
$$
 it follows by induction that
$(\E Y^{(i_0)}_{j,g})_{(i_0, j)\in[d]^2} = A^g,\, g\in\N_0$.
 By Fubini's theorem, we then get that
 $(\E F^{(i_0)}_j(\R))_{(i_0, j)\in[d]} = \sum_{g\geq 0} (\E Y^{(i_0)}_{g,j})_{(i_0, j)\in[d]}
= \sum_{g \geq 0} A^g$.
Given its entries are finite, the limit matrix $\sum_{g \geq 0} A^g$ is calculated like the limit of a real-valued converging geometric series. The equivalence in Proposition~\ref{prop:expectation} follows from the fact that
\begin{align}
\sum_{g = 0}^\infty A^g \text{ converges}\quad \Leftrightarrow \quad \max\big\{|\lambda|:\, \lambda \text{ eigenvalue of } A\big\}< 1,\quad \text{for $A\in\R^{d\times d}$.}\label{eq:eigenvalue}
\end{align}
A detailed proof for \eqref{eq:eigenvalue} can be found in \citet{watson15}. 
\end{proof}
In particular, we get from Proposition~\ref{prop:expectation} that a Hawkes family whose branching matrix satisfies \eqref{eq:eigenvalue} consists of an almost surely finite number of points. 
\begin{definition}\label{def:hawkes_process}
 Let $\bfI = (I_1, I_2, \dots, I_d)$ be a \emph{Hawkes immigration process} with $I_{i_0}\sim$ PRM($\eta_{i_0}\d s$), $i_0\in[d]$, independent, where $\eta_{i_0}\geq 0, \, i_0\in [d],$ are (constant) \emph{immigration intensities}.  Furthermore, let $\bfF^{(i_0)}_t(\cdot):= \bfF^{(i_0,t)}(\cdot - t),\, t\in\R$, where $\bfF^{(i_0,t)},\, t\in\R, \,i_0 \in[d],$ are independent copies of the generic Hawkes family processes
$\bfF^{(i_0)}$ from Definition~\ref{hawkes_family} above---also independent from the immigration process $\bfI$. 
Set
$$
\bfN (B) :=\big(N_1(B), \dots, N_d(B)\big):= \sum\limits_{i_0 = 1}^d\int_{\R}\bfF^{(i_0)}_t(B)\,I_{i_0}(\d t),\quad B\in\mathcal{B}(\R).
$$
The $d$-tuple of random counting measures $\bfN$ is a \emph{$d$-type Hawkes process}. If $N_i(\{T\}) =1$, for some $i \in[d]$, we say that $T$ is a \emph{type-$i$ event}  or, synonymously,  an \emph{event in component $i$}.  The Hawkes process $\bfN$ is \emph{subcritical} if the corresponding embedded generation processes are subcritical, i.e., if the spectral radius of their branching matrix is strictly smaller than 1. 
\end{definition}
From \citet{hawkes74} we have that, in the subcritical case, a Hawkes process $\bfN$, constructed as in Definitions~\ref{hawkes_family} and \ref{def:hawkes_process}, is a stationary solution to the system of
implicit equations
\begin{eqnarray}
{\Lambda}_j(t)&:=&\lim\limits_{\delta\downarrow 0} \frac{1}{\delta}\E\bigg[{N}_j\big((t, t+\delta]\big)\Big|\sigma\Big(\bfN\big((a,b]\big),\, a<b\leq t\Big)\bigg]\nonumber\\
&=&
\bfeta_j + \sum\limits_{i = 1}^d \int\limits_{-\infty}^{t}h_{i,j}(t-s) N_i
\left(\d s\right),\quad t\in\R,\,  j\in[d].\label{multivariate_intensity}
\end{eqnarray}
We call $\mathbf{\Lambda}(t) := ({\Lambda}_1(t),{\Lambda}_2(t),\dots,{\Lambda}_d(t))$ the \emph{conditional intensity} of $\bfN$. In terms of intensities, the value of a reproduction intensity at time $t$, $h_{i,j}(t)$, denotes the effect of an event $T^{(i)}$ in component $i$  on the intensity of component $j$ at time $T^{(i)}+t$. 
\begin{remark}\label{ij}
In most work on Hawkes processes, including the original introductions \citep{hawkes71a, hawkes71b} and also including \citep{kirchner16d}, the function $h_{i,j}$ models the excitement \emph{from component $j$ on component $i$}. This somewhat counter-intuitive notation stems from the linear algebra used when writing \eqref{multivariate_intensity} with matrix multiplication. In the present graph-driven work, `$a_{i,j}$', `$w_{i,j}$', `$h_{i,j}$', and `$(i,j)\in\mathcal{E}$' \emph{all refer to the effect from component $i$ on component $j$.}
\end{remark}

\subsection{Hawkes skeleton and Hawkes graph}
We interpret the branching structure of the Hawkes process in terms of `causality'. The overall goal of causality research is to describe dependencies in a directed manner---rather than applying commutative concepts such as correlation; see~\cite{pearl09} for a recent overview. The notion of causality is subtle. For Hawkes processes, however, the use of the term seems justified. Indeed, in the context of event streams, things cannot become much more `causal' than in the recurrent parent/children relation of a branching process: if we delete an event in the branching construction from the definitions in Section~\ref{multivariate_def:hawkes_processes} above, its offspring vanishes. So---without discussing causality formally---we postulate that given an event in component $i$, it directly \emph{causes} $\mathrm{Pois}(a_{i,j})$ new events in component $j$. This makes the branching coefficient $a_{i,j}$ an obvious measure for the strength of the causal effect from component $i$ on component $j$. Such causal effects are often represented as directed graphs. In the literature on causality, a graphical approach for modeling the interdependence of event streams  can for instance be found in \cite{meek14} or \cite{gunawardana14}---without any mentioning of `Hawkes'. This shows how natural the definition of a Hawkes graph is. First, we introduce some general graph terminology:

\begin{definition}\label{def:graph}
Let $d\in\N$ and $[d] = \{1,2,\dots, d\}$. A \emph{graph} $\mathcal{G}$ is a 2-tuple $(\mathcal{V},\mathcal{E})$, where $\mathcal{V} =[d]$ is a set of  \emph{vertices} and $\mathcal{E}\subset \mathcal{V} \times \mathcal{V}$ is a set of \emph{edges}. Given such a graph $\mathcal{G}$ we introduce the following definitions:
\begin{enumerate}[label = \roman*)]
\item Vertex $i$ is a \emph{parent} of vertex $j$ if $(i,j)\in\mathcal{E}$. We write
$\PA(j):=\{i:\ (i,j)\in\mathcal{E}\}$. Vertex $i$ is a \emph{source vertex} if $\PA(i)\setminus\{i\}=\emptyset$.Vertex $i$ is a \emph{sink vertex} if $\{j:\, (i,j)\in\mathcal{E}\} \setminus\{i\}=\emptyset$.
\item For $g\in\N$, $(k_0, k_1,\dots, k_g)\in\mathcal{V}^{g + 1}$ is a \emph{walk} in $ \mathcal{G}$ of length $g$ from vertex $i$ to vertex $j$ if $k_0=i, k_g =j$ and $(k_{l-1},k_{l})\in\mathcal{E},\,l\in[g]$; $(k_0, k_1,\dots, k_g)\in\mathcal{V}^{g + 1}$ is a \emph{closed walk} if it is a walk with $k_0 = k_g$. 
We denote the \emph{set of walks in $\mathcal{G}$ from $i$ to $j$ with length $g\in\N$} by $\mathcal{W}_{g}^{(i,j)}$. 
Furthermore, we set $\mathcal{W}^{(i,j)}_0:=\emptyset$ if $i\neq j$, $\mathcal{W}^{(i,j)}_0:= \{(i)\}$ if $i = j$, $\mathcal{W}^{(i,j)}:=\cup_{g \geq 0}\mathcal{W}_g^{(i,j)}$, and $\mathcal{W}:=\cup_{(i,j) \in [d]^2}\mathcal{W}^{(i,j)}$.
\item Vertex $i$ is an \emph{ancestor} of $j$ if there exists a walk of length $g\in \N$ from $i$ to $j$. We denote the \emph{ancestor set} of a vertex $i$ in $\mathcal{G}$ by $\AN(i)$. 

\item The vertices $i$ and $j$
are \emph{weakly connected} if $i = j$ or if there exists a set
 $\{(k_{l-1},k_{l}),\, l = 1,\dots,g:\ k_0=i, k_g =j, (k_{l},k_{l-1}) \in\mathcal{E}\ \text{ or }\,(k_{l-1},k_{l}) \in\mathcal{E}  \}$ for some $g\in\N$. The vertices $i$ and $j$ are \emph{strongly connected} if the sets $\mathcal{W}^{i,j}$ and $\mathcal{W}^{j,i}$ are nonempty. A graph is \emph{weakly (strongly) connected} if all pairs of its vertices are weakly (strongly) connected. A graph is \emph{fully connected} if $(i,j)\in\mathcal{E},\ (i,j)\in [d]^2$. 
 \end{enumerate}
\end{definition}
Note that in our definition, a graph allows cycles and, in particular, self-loops. A vertex may or may not be an ancestor and, in particular, a parent of itself. Also note that any vertex $i$ is always strongly connected to itself because $\{(i)\}\subset\mathcal{W}^{(i,i)},\, i\in [d]$---no matter if $i$ is contained in a closed walk or not. Consequently, the singleton graph is always strongly connected. However, it is only fully connected if is a self-loop. Next, we apply  the graph terminology from Definition~\ref{def:graph} to the Hawkes process:
\begin{definition}\label{hg}
Let ${\bf N}$ be a $d$-type Hawkes process with immigration intensities $\eta_1,\eta_2,\dots,\eta_d$ and branching coefficients $a_{i,j} (= \int h_{i,j}(t)\d t )$, $(i,j)\in[d]^2$; see Definitions~\ref{hawkes_family} and \ref{def:hawkes_process}. The \emph{Hawkes graph skeleton} $\mathcal{G}^*_{\bf N} = (\mathcal{V}^*_{\bf N}, \mathcal{E}^*_{\bf N})$ of  ${\bf N}$ consists of a set of vertices $\mathcal{V}_{\bfN}^* = [d]$ and a set of edges
$$
\mathcal{E}^*_{\bf N}:=\Big\{(i,j)\in\mathcal{V}_{\bfN}^*\times\mathcal{V}_{\bfN}^* :\ 
a_{i,j}> 0 \Big\}.
$$
For $j\in[d]$, we denote the \emph{parent}, respectively, \emph{ancestor set} of $j$ with respect to the Hawkes skeleton $\mathcal{G}^*_{\bf N}$ by $\PA_{\bfN}(j)$ and $\AN_{\bfN}(j)$.
For the \emph{Hawkes graph} $\mathcal{G}_{\bf N} = (\mathcal{V}_{\bf N}, \mathcal{E}_{\bf N})$ of ${\bf N}$, each vertex, respectively, edge of the corresponding Hawkes skeleton is supplied with a \emph{vertex}, respectively, an \emph{edge weight}:
\begin{eqnarray*}
\mathcal{V}_{\bf N} &:=& \Big\{(j;\eta_j):\quad j\in \mathcal{V}^*_{\bf N}\ \mathrm{and}\ \eta_j\ \mathrm{is}\ \mathrm{the}\ j\text{-th}\ \mathrm{immigration}\ \mathrm{intensity}\ \mathrm{of}\ \bfN\Big\},\\
 \mathcal{E}_{\bf N} &:= &
\bigg\{(i,j; a_{i,j}):\quad (i,j)\in\mathcal{E}^*_{\bf N} \ \text{and}\ (a_{i,j})_{(i,j)\in[d]^2} \text{ is the branching matrix of }\bfN\bigg\}.
\end{eqnarray*}
We call the branching matrix $A = \(a_{i,j}\)\in\R_{\geq 0}^{d\times d}$ of $\bfN$ the \emph{adjacency matrix} of $\mathcal{G}_{\bfN}$.
\begin{enumerate}[label = \roman*)]
\item A Hawkes graph $\mathcal{G}_{\bfN}$ is \emph{weakly, strongly, respectively, fully connected} if the corresponding skeleton $\mathcal{G}^*_{\bfN}$ is weakly, strongly, respectively, fully connected; see Definition~\ref{def:graph}.
\item Vertex $(j;\eta_j)$ of a Hawkes graph $\mathcal{G}_{\bfN}$ is a \emph{source}, respectively, \emph{sink vertex}, if it is a source, respectively, sink vertex in the corresponding skeleton $\mathcal{G}^*_{\bfN}$.  Furthermore, $(j;\eta_j)$ is a \emph{redundant vertex} if $\eta_j = 0$ and, in addition, $\eta_i = 0$ for all $i\in \AN_{\bfN}(j)$.
\item
For any walk $w\in \mathcal{W}_{\mathcal{G}_{\bfN}}\, (:=\mathcal{W}_{\mathcal{G}_{\bfN}^*} )$ in a Hawkes graph $\mathcal{G}_{\bfN}$, we define the \emph{walk weights}
$$
|w| = |(i_0, i_1,\dots,i_g)| := 
\begin{cases}
1,& g = 0, \text{ and}\\
\prod_{l = 1}^g a_{i_{l-1},i_l}, & g > 0,
\end{cases}
$$
where $a_{i_{l-1}, i_l},\ l= 1,2,\dots, g,$ are the edge weights from $\mathcal{E}_{\bfN}$.
\item \label{item:iv} A Hawkes graph is \emph{subcritical} if 
\begin{align}
\sum_{w \in {\mathcal{W}}^{(i_0, i_0)}} |w|& <\infty,\, i_0\in[d]\text{, or, equivalently, }
\sum_{\substack{w:\\ w \text{ closed walk in } \mathcal{G}_{\bfN}}} |w| <\infty.
\label{subcriticality}
\end{align}
\end{enumerate}
\end{definition}
Note that if a Hawkes graph vertex is redundant, then all its ancestors are also redundant. The notion of a subcritical Hawkes graph in Definition~\ref{hg}~\ref{item:iv} might ask for further explanation. The following theorem clarifies things:
\begin{theorem}\label{prop:graph_subcriticality}
 Let $\bfN$ be a Hawkes process and let $\mathcal{G}_{\bfN}$ be the corresponding Hawkes graph. Then $\bfN$ is a subcritical Hawkes process (see Definition~\ref{def:hawkes_process}) if and only if $\mathcal{G}_{\bfN}$ is a subcritical  Hawkes graph (see Definition~\ref{hg}).
\end{theorem}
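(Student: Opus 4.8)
The plan is to connect the combinatorial walk sums in Definition~\ref{hg}~\ref{item:iv} to powers of the branching matrix $A$, and then invoke Proposition~\ref{prop:expectation} together with the spectral characterization~\eqref{eq:eigenvalue}. The key observation is the classical identity from algebraic graph theory: for every $g\in\N$ and every $(i,j)\in[d]^2$,
\begin{align*}
\sum_{w\in\mathcal{W}_g^{(i,j)}} |w| = \big(A^g\big)_{i,j},
\end{align*}
which follows by induction on $g$, expanding $(A^g)_{i,j} = \sum_{k} (A^{g-1})_{i,k} a_{k,j}$ and matching this with the decomposition of a walk of length $g$ from $i$ to $j$ into a walk of length $g-1$ from $i$ to some $k\in\PA_{\bfN}(j)$ followed by the edge $(k,j)$. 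Note that the sum is effectively over $\mathcal{W}_g^{(i,j)}$ because $a_{k,j} = 0$ precisely when $(k,j)\notin\mathcal{E}^*_{\bfN}$, so only genuine walks in $\mathcal{G}_{\bfN}$ contribute. Summing over $g\geq 0$ gives $\sum_{w\in\mathcal{W}^{(i,j)}}|w| = \sum_{g\geq 0}(A^g)_{i,j}$, an identity in $[0,\infty]$.

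Next I would argue the equivalence of the two formulations of subcriticality inside~\eqref{subcriticality}. The sum over all closed walks is $\sum_{i_0\in[d]}\sum_{w\in\mathcal{W}^{(i_0,i_0)}}|w|$ minus a correction for the length-zero walks (the trivial walks $(i_0)$ contribute $1$ each, a finite total), so finiteness of $\sum_{i_0}\sum_{w\in\mathcal{W}^{(i_0,i_0)}}|w|$ is equivalent to finiteness of the sum over all nontrivial closed walks; and since a finite sum of nonnegative extended reals is finite iff each summand is finite, this is equivalent to $\sum_{w\in\mathcal{W}^{(i_0,i_0)}}|w|<\infty$ for every $i_0\in[d]$. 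Using the walk identity, this in turn says exactly that $\sum_{g\geq 0}(A^g)_{i_0,i_0}<\infty$ for all $i_0$, i.e.\ the diagonal entries of $\sum_{g\geq 0}A^g$ are finite.

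It then remains to see that finiteness of all diagonal entries of $\sum_{g\geq 0}A^g$ is equivalent to finiteness of \emph{all} entries, hence to convergence of the matrix series. This is where nonnegativity of $A$ is essential: for any $i,j$ one has $(A^{g+2})_{j,j}\geq (A)_{j,i_1}\cdots$—more directly, if the graph is such that $j$ is reachable from $i$ and $i$ from $j$ (so the off-diagonal sum $(A^\cdot)_{i,j}$ is relevant), one can insert a walk $i\to j$ and a walk $j\to i$ around any walk contributing to $(A^g)_{i,j}$ to produce a closed walk at $i$, yielding $\sum_g (A^g)_{i,j}\leq C\sum_g(A^g)_{i,i}$ for a constant $C$ depending only on the fixed connecting walks; entries $(A^g)_{i,j}$ with $j$ not reachable from $i$ are identically zero. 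Hence all diagonal sums finite $\Rightarrow$ all sums finite $\Rightarrow$ $\sum_{g\geq 0}A^g$ converges $\Rightarrow$ (by~\eqref{eq:eigenvalue}) the spectral radius of $A$ is $<1$, which by Definition~\ref{def:hawkes_process} is precisely subcriticality of $\bfN$; the reverse implication is immediate since convergence of $\sum_g A^g$ forces every entry, in particular every diagonal entry, to be a finite sum.

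The main obstacle I anticipate is the off-diagonal-to-diagonal reduction in the last paragraph: one must handle carefully the case where no walk connects $i$ and $j$ in one or both directions, and make the "insert connecting walks" argument rigorous (choosing fixed finite walks and bounding the reindexed sum). An alternative, perhaps cleaner, route that sidesteps this is to prove the chain directly: $\mathcal{G}_{\bfN}$ subcritical $\Leftrightarrow$ $\sum_g (A^g)_{i_0,i_0}<\infty\ \forall i_0$ $\Leftrightarrow$ spectral radius of $A$ is $<1$, using that the diagonal sums are finite iff $\sum_g \mathrm{tr}(A^g)=\sum_g\sum_k \lambda_k^g<\infty$ (with $\lambda_k$ the eigenvalues of $A$, counted with algebraic multiplicity, via the Jordan form), which—again because $A\geq 0$, so by Perron--Frobenius the spectral radius is itself an eigenvalue—holds iff $|\lambda_k|<1$ for all $k$. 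Either way, Proposition~\ref{prop:expectation} and~\eqref{eq:eigenvalue} supply the analytic content, and the new work is purely the bookkeeping translating walk sums into matrix powers.
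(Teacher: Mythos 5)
Your overall skeleton---identify $\sum_{w\in\mathcal{W}_g^{(i,j)}}|w|$ with $(A^g)_{i,j}$, reduce \eqref{subcriticality} to finiteness of the diagonal of $\sum_{g\geq0}A^g$, and then feed the full matrix series into Proposition~\ref{prop:expectation} and \eqref{eq:eigenvalue}---is the same as the paper's, and your treatment of the two equivalent formulations inside \eqref{subcriticality} is fine. The entire content of the theorem, in your write-up as in the paper's, is the step ``all diagonal sums finite $\Rightarrow$ all entries of $\sum_g A^g$ finite'', and this is where your primary argument has a genuine gap. The walk-insertion bound covers pairs $(i,j)$ that are mutually reachable, and pairs where $j$ is not reachable from $i$ at all (entry identically zero); it says nothing about the remaining case where $j$ is reachable from $i$ but $i$ is not reachable from $j$. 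There $\sum_g(A^g)_{i,j}$ is a genuinely nonzero infinite sum and there is no walk from $j$ back to $i$ with which to close the loop, so the injection of $\mathcal{W}^{(i,j)}$ into $\mathcal{W}^{(i,i)}$ is unavailable. This case is not exotic (take edges $1\to1$, $1\to2$, $2\to2$); you flag it as ``the main obstacle'' but do not resolve it, and resolving it is precisely the work the paper does: an induction on the number of vertices $d$, splitting every walk from $i_0$ to $j$ that visits the distinguished vertex $d$ into $d$-avoiding subwalks (finite by the induction hypothesis) and a closed-walk factor at $d$ (finite by assumption). That decomposition needs no reachability hypotheses and closes exactly this hole.

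Your alternative route via $\sum_g\mathrm{tr}(A^g)$ and Perron--Frobenius does sidestep the problem and would be an acceptable, genuinely different way to finish, but as stated it is also too quick: the fact that $\rho(A)$ is an eigenvalue does not by itself force $\sum_g\mathrm{tr}(A^g)=\infty$ when $\rho(A)\geq1$, because the powers $\lambda_k^g$ can cancel inside $\mathrm{tr}(A^g)=\sum_k\lambda_k^g$ (think of a directed cycle of length three with unit weights, whose eigenvalues are the cube roots of unity and whose trace vanishes for two out of every three values of $g$). One needs the additional observations that a closed walk stays inside a single strongly connected component, so that $\mathrm{tr}(A^g)=\sum_C\mathrm{tr}(A_C^g)$ over the irreducible diagonal blocks $A_C$, and that an irreducible nonnegative block with spectral radius at least $1$ and period $p$ satisfies $\liminf_m\mathrm{tr}(A_C^{mp})>0$. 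That is standard Perron--Frobenius material, but it is a comparable amount of work to the paper's combinatorial induction; either way, the key step must actually be carried out rather than asserted.
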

\begin{proof}
First, we prove that
\begin{align}
\sum\limits_{w\in{\mathcal{W}}^{(i_0, i_0)}}
|w| < \infty,\, i_0 \in[d] \quad
\Leftrightarrow 
\quad
\sum\limits_{w\in{\mathcal{W}}^{(i_0, j)}}
|w| < \infty,\, (i_0,j)\in[d]^2.\label{eq:equivalence}
\end{align}
`$\Leftarrow$' is trivial. We show `$\Rightarrow$' by induction over the graph size $d$: for $d = 1$, the implication is true. For $d > 1$, consider a graph with $d$ vertices and assume that the left-hand side of \eqref{eq:equivalence} holds. 
Pick any $(i_0, j)\in[d]^2$.
 We split the possible paths from $i_0$ to $j$, $\mathcal{W}^{(i_0, j)}$, into \emph{paths excluding $d$}, $\mathcal{W}^{(i_0, j)}_{\text{excl.} d}$, and \emph{paths including $d$}, $\mathcal{W}^{(i_0, j)}_{\text{incl.} d}$: 
\begin{align}
\sum\limits_{w \in{\mathcal{W}}^{(i_0, j)}}|w| 
 = \sum\limits_{w \in\mathcal{W}^{(i_0, j)}_{\text{excl.} d}} |w|
+  \sum\limits_{w \in\mathcal{W}^{(i_0, j)}_{\text{incl.} d}} |w|.\label{eq:split1}
\end{align}
The first sum is finite by the induction hypothesis. Now, assume the case that $i_0 \neq d$ and $j \neq d$. Every walk in the second sum of \eqref{eq:split1} may be (uniquely) split into the following five subwalks: a $d$-avoiding walk $w_1$ from $i_0$ to some $i_1\in\PA(d)$, a one-step walk $(i_1, d)$, a walk $w_2$ $\in \mathcal{W}^{(d,d)}$, another one-step walk $(d, j_1)$, with $d\in \PA(j_1)$, and finally some $d$-avoiding walk $w_3$ from $j_1$ to $j$. This yields
\begin{align*}
&\hspace{-0.5cm}\sum\limits_{w \in \mathcal{W}_{\text{incl.} d}^{(i_0, j)}} |w| \\
&= \sum_{i_1\in \PA(d)}\sum_{w_{1}\in{\mathcal{W}}^{(i_0, i_1)}_{\text{excl.} d}}\sum_{w_2 \in{\mathcal{W}}^{(d, d)}}\sum_{j_1:  d\in \PA(j_1)}\sum_{w_{3} \in {\mathcal{W}}^{(j_1, j)}_{\text{excl.} d}}|w_{1}|\, a_{i_1, d} \,|w_2| \, a_{d,j_1}\, |w_3 |\\
&\leq \sum\limits_{i_1\in \PA(d)}\sum_{j_1:  d\in \PA(j_1)} \max_{(i,j)\in [d]^2} a_{i,j}^2 \underbrace{\sum_{w_{1}\in{\mathcal{W}}^{(i_0, i_1)}_{\text{excl.} d}}|w_{1}|}_{< \infty\text{ by ind. hyp.}} \underbrace{\sum_{w_{2} \in {\mathcal{W}}^{(d, d)}} |w_{2}|}_{< \infty\text{ by assumption}}\underbrace{\sum_{w_{3} \in {\mathcal{W}}^{(j_1, j)}_{\text{excl.} d}} |w_{3} |}_{< \infty\text{ by ind. hyp.}}< \infty.
\end{align*} 
Note that, by definition, $(i)\in \mathcal{W}^{(i,i)}$ and $|(i)| = 1,\, i\in [d],$ so that the calculation above also covers the cases $ \PA(d) = \{i_0\} $ and $\PA(j) = \{d\}$ as well as $d$-including walks from $i$ to $j$ that touch $d$ exactly once. If $i_0 = d$ or $j = d$, the splitting argument becomes even simpler; we do not give the details. We have proven the finiteness of the second sum in \eqref{eq:split1} and therefore \eqref{eq:equivalence}.\\
\par Next, note that
\begin{align}
\sum\limits_{w\in{\mathcal{W}}^{(i_0, j)}}
|w| = \sum\limits_{g\geq 0}\sum\limits_{w\in\mathcal{W}_g^{(i_0, j)}}
|w|=\sum\limits_{g\geq 0} \E Y^{(i_0)}_{g,j}= \E F^{(i_0)}_j(\R),\quad (i_0,j) \in[d]^2, \label{eq:expectation}
\end{align}
where $(\bfY^{(i_0)}_{g}) = (Y^{(i_0)}_{g,1},Y^{(i_0)}_{g,2}, \dots , Y^{(i_0)}_{g,d})$ are the embedded generation processes of the generic family processes $\bfF^{(i_0)}= (F^{(i_0)}_1, \dots, F^{(i_0)}_d)$ of $\bfN$; see Definition~\ref{embedded_generation_process}. Thus, \eqref{subcriticality} is a complicated way of saying that, for all $(i_0,j)\in[d]^2$, the expected total number of type-$j$ offspring events of a type-$i_0$ event is finite, i.e., that $\E F^{(i_0)}_{j}(\R) < \infty,\, (i_0,j)\in[d]^2$. By Proposition \ref{prop:expectation}, this in turn is equivalent to the spectral radius of the branching matrix being strictly less than 1---which is the original Hawkes-\emph{process} subcriticality condition from Definition~\ref{def:hawkes_process}
\end{proof}
 Obviously, the Hawkes graph does not fully specify the corresponding Hawkes process; it only captures the structure of the embedded generation processes from Definition~\ref{embedded_generation_process} together with the immigration intensities. Despite this simplification, the Hawkes graph gives relevant insight into the underlying Hawkes process---especially in the highdimensional case. For example, \emph{connectivity and redundancy of vertices} are two graph-based concepts that become increasingly important the higher the dimension of the model considered is. If a Hawkes graph is not weakly connected, we may consider the \emph{weakly connected} subgraphs separately and correspondingly split the original model into separate, lower-dimensional Hawkes processes. The notion of {\it redundant vertices} is important because, typically, we only want to consider `accessible' event types. {\it Sink} ({\it source}) \emph{vertices} of a Hawkes graph correspond to Hawkes process components that only receive (give) excitement from (to) the system. The notion of \emph{parent sets} is also helpful: e.g., for the marginal conditional intensity in \eqref{multivariate_intensity}, it is actually enough to sum over $i\in\PA(j)$ instead of $i\in[d]$ which may be computationally beneficial. The \emph{ancestor sets} may be applied if we are only interested in modeling events of a particular type $j$. In this situation, it suffices to consider a Hawkes model for the event types in $\{j\}\cup\AN(j)$.
Finally, we find the formulation of Hawkes graph \emph{subcriticality} in \eqref{subcriticality} useful. It provides a more concrete meaning to the somewhat abstract eigenvalue-based criterion for the Hawkes process. E.g., \eqref{subcriticality} can be used when constructing subcritical Hawkes graphs, respectively, models. And---if a given graph is sparse and the closed walks are not too numerous---one can check subcriticality without even calculating any eigenvalue; see Section~\ref{example_model}. Furthermore, in some cases, the \emph{path weights} $|w|$ themselves might be worth calculating---even apart from criticality conditions; see the discussion in the proof of Theorem~\ref{prop:graph_subcriticality}. Last but not least, the graph structure obviously allows for attractive self-explaining illustrations; see Figures~\ref{fig1} and~\ref{fig2}. In the following proposition, we collect some specific graphical and statistical information that may be calculated from the adjacency matrix of a Hawkes graph:

\begin{proposition}\label{adj}
For some $d\geq 2$, let $\bfN$ be a $d$-type subcritical Hawkes process. Furthermore, let $\mathcal{G}_{\bfN} = (\mathcal{V}_{\bfN}, \mathcal{E}_{\bfN})$ be the corresponding Hawkes graph with adjacancy matrix $A = (a_{i,j})\in\R^{d\times d}_{\geq 0}$. Then we have that
\begin{enumerate}[label = \roman*)]
\item $a_{i,j}>0\quad \Leftrightarrow\quad i\in\PA_{\bfN}(j)$;
\item $a_{i,j}=0,\, j\in[d]\setminus\{i\}$ $\quad \Leftrightarrow\quad$ vertex $i$ is a sink vertex;
\item $a_{i,j}=0,\, i\in[d]\setminus\{j\}$ $\quad \Leftrightarrow\quad$ vertex $j$ is a source vertex;
\item $(A^g)_{i,j}>0\quad \Leftrightarrow \quad$ there is a walk of length $g$ from $i$ to $j$;
\item $(A^g)_{i,j}>0\ \text{ for some }g\in[d]\quad \Leftrightarrow$\quad$i\in\AN(j)$;
\item for all $(i,j)\in[d]^2$, $((A + A^{\top})^g)_{i,j}>0\ \text{ for some }g\in\{0\}\cup[d - 1]
\quad \Leftrightarrow \quad$ the Hawkes graph $\mathcal{G}_{\bfN}$ is weakly connected;
\item for all $(i,j)\in[d]^2$, $((A)^g)_{i,j}>0\ \text{ for some }g\in\{0\}\cup[d - 1]
\quad \Leftrightarrow \quad$ the Hawkes graph $\mathcal{G}_{\bfN}$ is strongly connected;
\item $a_{i,j}>0,\, (i,j)\in[d]^2$  $\quad \Leftrightarrow\quad$ the Hawkes graph $\mathcal{G}_{\bfN}$ is fully connected;
\end{enumerate}
\end{proposition}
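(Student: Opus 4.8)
\emph{Overall plan.} The eight claims split into two natural groups: the first concern the one-step structure of the skeleton, the remaining ones its transitive closure, and I would treat them in that order; the subcriticality hypothesis is not actually used, it is only there so that $\bfN$ is a genuine stationary Hawkes process. Items (i), (ii), (iii), (viii) are immediate from the definitions. By Definition~\ref{hg}, $(i,j)\in\mathcal{E}^*_{\bfN}$ exactly when $a_{i,j}>0$. Reading this against Definition~\ref{def:graph}: $i\in\PA_{\bfN}(j)$ means $(i,j)\in\mathcal{E}^*_{\bfN}$, which is (i); $i$ is a sink vertex iff it has no outgoing edge other than possibly $(i,i)$, i.e.\ $a_{i,j}=0$ for all $j\neq i$, which is (ii); dually for source vertices, giving (iii); and full connectedness means every ordered pair is an edge, i.e.\ $a_{i,j}>0$ for all $(i,j)$, giving (viii). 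I would dispatch these in two or three lines.

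\emph{The walk-counting identity and item (iv).} Everything else rests on the identity $(A^g)_{i,j}=\sum_{w\in\mathcal{W}_g^{(i,j)}}|w|$ for all $g\in\N_0$ and $(i,j)\in[d]^2$, which is already contained in the proofs of Proposition~\ref{prop:expectation} and Theorem~\ref{prop:graph_subcriticality} (cf.\ \eqref{eq:expectation}); here the sum runs over the finite set of length-$g$ walks and each $|w|$ is a product of strictly positive edge weights (with $|w|=1$ for $g=0$). Hence $(A^g)_{i,j}>0$ if and only if $\mathcal{W}_g^{(i,j)}\neq\emptyset$, which is item (iv). (Equivalently, expand $(A^g)_{i,j}$ as a sum of monomials $a_{i,k_1}a_{k_1,k_2}\cdots a_{k_{g-1},j}$ and use non-negativity of the entries.) The same elementary argument applies verbatim to any non-negative $d\times d$ matrix, which I will reuse for $A+A^{\top}$.

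\emph{Items (v), (vi), (vii).} Each combines (iv) with the standard shortest-walk reduction: a walk of positive length from $i$ to $j$ of minimal length cannot repeat a vertex except (when $i=j$) its common endpoint, hence visits at most $d$ vertices and has length at most $d-1$ if $i\neq j$ and at most $d$ if $i=j$. For (v), $i\in\AN_{\bfN}(j)$ is by definition the existence of a walk of some length $g\in\N$ from $i$ to $j$; by the reduction this is equivalent to such a walk with $g\in[d]$, and by (iv) that is $(A^g)_{i,j}>0$ for some $g\in[d]$. For (vii) one runs the same reasoning directly on $A$: since the stated condition is quantified over all ordered pairs, for each pair $\{i,j\}$ it yields walks from $i$ to $j$ and from $j$ to $i$, so it is equivalent to $\mathcal{W}^{(i,j)}$ and $\mathcal{W}^{(j,i)}$ being nonempty for all $i,j$, i.e.\ to strong connectedness; the index $0$ only absorbs the diagonal via $(A^0)_{i,i}=1$, and off the diagonal the minimal walk length lies in $[d-1]$. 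For (vi) one first notes that $A+A^{\top}$ is the adjacency matrix of the symmetrization of $\mathcal{E}^*_{\bfN}$, since $(A+A^{\top})_{i,j}>0$ iff $a_{i,j}>0$ or $a_{j,i}>0$; applying the walk-counting argument to the non-negative matrix $A+A^{\top}$ and the shortest-(undirected-)walk bound $\leq d-1$ for distinct vertices (with $g=0$ for the diagonal) turns the stated condition, over all $(i,j)$, into ``every pair of vertices is joined by an undirected walk'', which is exactly weak connectedness in the sense of Definition~\ref{def:graph}.

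\emph{Main obstacle.} There is no substantive obstacle: the content — the walk-counting identity and the shortest-walk reduction — is either already in the paper or entirely standard. The only delicate point is bookkeeping of the index sets and the handling of $i=j$: in particular, why (v) uses the range $[d]$ (a shortest \emph{closed} walk through a vertex may require length up to $d$) whereas (vi) and (vii) use $\{0\}\cup[d-1]$ (the diagonal is absorbed by $g=0$, and only shortest walks between \emph{distinct} vertices, of length at most $d-1$, occur off the diagonal). I would make those edge cases explicit and leave the remaining verifications as routine.
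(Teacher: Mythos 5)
Your proof is correct, and it fills in exactly the routine verification the paper itself omits (the paper only remarks that ``the properties above can easily be checked''): the walk-counting identity $(A^g)_{i,j}=\sum_{w\in\mathcal{W}_g^{(i,j)}}|w|$ is already implicit in \eqref{eq:expectation}, and your handling of the index ranges --- $[d]$ in (v) because a shortest closed walk may need length $d$, versus $\{0\}\cup[d-1]$ in (vi)--(vii) where $g=0$ absorbs the diagonal and shortest walks between distinct vertices have length at most $d-1$ --- is precisely the bookkeeping that needs to be made explicit. No gaps.
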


The properties above can easily be checked. They may help to describe the relationships between Hawkes process components, respectively, Hawkes graph vertices. Two specific $\mathbb{R}_{\geq 0}^d$-vectors might be particularly meaningful statistical summaries of a Hawkes graph, respectively, Hawkes process:

\begin{definition}\label{coefficients}
Let ${\bf N}$ be a subcritical $d$-type Hawkes process and let $A$ be the adjacency matrix of the corresponding Hawkes graph $\mathcal{G}_{\bfN}$. Consider 
the limit matrix $\mathbb{R}_{\geq 0}^{d\times d}\ni (e_{i,j}) := (1_{d\times d} - A)^{-1} = \sum_{g\geq 0} A^g\, (= (\E F^{(i_0)}_j(\R))_{(i_0, j)\in[d]^2})$ from Proposition~\ref{prop:expectation} and define
$$
c_{i_0} :=  \frac{\eta_{i_0}\sum_{j = 1}^d e_{i_0,j}}{\sum_{i=1}^d \eta_{i}\sum_{j= 1}^d e_{i,j }},\quad {i_0}\in [d], \quad\text{and}\quad
 f_j := \frac{\eta_j e_{j,j}}{{\sum_{i = 1}^d\eta_i{e_{i,j}}}},\quad j\in [d].
 $$
 We call $(c_{i_0})_{i \in [d]}$ the \emph{cascade coefficients} and $(f_j)_{j \in[d]}$ the \emph{feedback coefficients.}
 \end{definition}
\par 
One way of tuning a specific Hawkes graph may be achieved by `switching-off' a selected vertex by forcing the corresponding immigration intensity to zero. The coefficients defined above summarize the effect of such a manipulation. In view of Proposition~\ref{prop:expectation}, we have the following interpretations. First of all, the \emph{cascade coefficients} $(c_i)$ are important from a \emph{systemic} point of view. The cascade coefficient $c_i$ measures the  fraction of events in the system stemming from families with immigrated type-$i$ ancestor. If $c_i > 1/d$, this indicates a relatively large impact of type-$i$ events on the system. 
Secondly, the \emph{feedback coefficients} $(f_j)$ are more important from an \emph{individual} point of view. They indicate how much of the total intensity that a vertex $j$ \emph{experiences} is due to its own immigration activity including the feedback it experiences by closed walks. We illustrate both concepts in Section~\ref{example_model}.
\section{Estimation}\label{estimation}
In this section, we give a summary of earlier work, where we introduced a nonparametric estimation procedure for the multivariate Hawkes process. Based on this approach, we introduce an estimation procedure for the Hawkes skeleton and the Hawkes graph. In particular, we clarify how one can bypass numerical problems in high-dimensional settings. Finally, we explain how one can use the results for completely specifying and estimating a parametric Hawkes model. 

\subsection{Earlier results}\label{earlier_results}
In \citep{kirchner16c}, we showed that the distributions of the bin-count sequences of a Hawkes process can be approximated by the distribution of so called \emph{integer-valued autoregressive time series} INAR(p). This approximation yields an estimation method for the Hawkes process: we fit the approximating model on observed bin-counts of point process data. The resulting estimates can be used as estimates of the Hawkes reproduction intensities on a finite and equidistant grid; see \cite{kirchner16d}. For illustration, consider a univariate Hawkes process $N$ with reproduction intensity $h$ and immigration intensity $\eta$. Given data from $N$ in a time window $(0,T]$, $\Delta>0$, small, bin counts $X^{(\Delta)}_n := N\big(((n-1)\Delta, n\Delta]\big),\, k = 1,2,\dots,n :=\lfloor T/\Delta\rfloor$, and some $p\in\N$, large, we calculate 
\begin{align}
\(\hat{\alpha}^{(\Delta)}_0,\hat{\alpha}^{(\Delta)}_1,\dots, \hat{\alpha}^{(\Delta)}_p\) := \mathrm{argmin}_{(\alpha_0^{(\Delta)},\alpha_1^{(\Delta)},\dots,\alpha_p^{(\Delta)})}\sum\limits_{k=p+1}^{n}
\(X^{(\Delta)}_k - \alpha^{(\Delta)}_0-\sum\limits_{l=1}^p\alpha^{(\Delta)}_lX^{(\Delta)}_{k-l}\)^2. \label{errors}
\end{align}
Given~\eqref{errors}, we
estimate the reproduction-intensity values $h(k\Delta)$, $k=1,2,\dots,p$, of $N$ by $\hat{h}_k := \hat{\alpha}_k^{(\Delta)}/{\Delta}$ and the immigration intensity $\eta$ by $\hat{\eta} := \hat{\alpha}^{(\Delta)}_0/\Delta$.
The multivariate case is conceptually equivalent but somewhat cumbersome notationwise. 
Furthermore---due to the special distribution of the errors---the covariance matrix of the estimates is nonstandard. This is why we give all formulas in some detail. The following definitions and properties are taken from \cite{kirchner16d}---modulo transposition as stated in Remark~\ref{ij}. 

\begin{definition} \label{estimator}
Let $\bfN= \left(N_1,N_2,\dots,N_d\right)$ be a subcritical $d$-type Hawkes process with immigration intensity $\bfeta\in\R_{\geq 0}^d\setminus \{0_d\}$ and reproduction intensities $h_{i,j}: \R_{\geq 0}\to \R_{\geq 0}$, $(i,j)\in [d]^2 $. 
Let $T>0$ and
consider a sample of the process on the time interval $(0,T]$.
For some $\Delta>0$, construct the $\N^d_0$-valued \emph{bin-count sequence} from this sample:

\begin{equation}
\bfX^{(\Delta)}_k := \bfN \Big(\big((k-1)\Delta,k\Delta\big]\Big)^\top\in\N_0^{d\times 1},\quad k = 1, 2, \dots , n := \left\lfloor T/\Delta\right\rfloor.\label{bin_count_sequence}
\end{equation}
Define the \emph{multivariate Hawkes estimator} with respect to some support $s,\, \Delta < s <T$,
\begin{equation}
\widehat{\bf{H}}^{{(\Delta,s)}}:=  \frac{1}{\Delta}\left(\bfZ^\top \bfZ \right)^{-1} \bfZ^\top\bfY\quad \in\R^{ (dp+1)\times d}.\label{estimator_calculation}
\end{equation}Here,

\begin{align}
\bfZ\left(\mathbf{X}^{(\Delta)}_1,\dots ,\mathbf{X}^{(\Delta)}_n\right):=\left(\begin{array}{ccccc}
(\mathbf{X}^{(\Delta)}_{p})^\top & (\mathbf{X}^{(\Delta)}_{p-1})^\top& \dots   &(\mathbf{X}^{(\Delta)}_{1})^\top&1 \\
(\mathbf{X}^{(\Delta)}_{p + 1})^\top & (\mathbf{X}^{(\Delta)}_{p})^\top& \dots  & (\mathbf{X}^{(\Delta)}_{2})^\top&1\\
\dots  & \dots  & \dots  & \dots  &\dots\\
(\mathbf{X}^{(\Delta)}_{n-1})^\top& (\mathbf{X}^{(\Delta)}_{n-2})^\top & \dots    & (\mathbf{X}^{(\Delta)}_{n-p})^\top & 1
\end{array}\right)\in\R^{ (n-p)\times (dp+1)}\label{design_matrix}
\end{align}
is the \emph{design matrix}
and $\bfY\left(\mathbf{X}^{(\Delta)}_1,\dots ,\mathbf{X}^{(\Delta)}_n\right):=\left(\mathbf{X}^{(\Delta)}_{p+1}, \mathbf{X}^{(\Delta)}_{p+2}, \dots  ,\mathbf{X}^{(\Delta)}_{n}\right)^\top\in\R^{(n-p)\times d}
$ with $p:= \lceil s / \Delta\rceil$.
\end{definition}
For the following considerations, we drop the `${(\Delta,s)}$' superscript.  Note that also the matrices $\bfZ$ and $\bfY$ depend on $\Delta$. Additional notation clarifies what the entries of the matrix  $\widehat{\bf{H}}$ in \eqref{estimator_calculation} actually estimate:
\begin{align}\label{estimator_interpretation}
\left(\begin{array}{c}
\widehat{H}_1\\
 \dots \\
  \widehat{H}_p\\
  \hat{\bfeta}
\end{array}\right)
:= \widehat{\bf{H}} \in\R^{ (dp + 1)\times d},\quad \text{where}\quad \widehat{H}_k :=  \left(\begin{array}{cccc}
\hat{h}_{1,1}(k\Delta)& \hat{h}_{1,2}(k \Delta)&\dots & \hat{h}_{1,d}(k \Delta)\\
 \hat{h}_{2,1}(k\Delta)& \hat{h}_{2,2}(k \Delta)&\dots & \hat{h}_{2,d}(k \Delta)\\
\dots& \dots&\dots & \dots\\
\hat{h}_{d,1}(k\Delta)& \hat{h}_{d,2}(k \Delta)&\dots & \hat{h}_{d,d}(k \Delta)\\
\end{array}\right).
\end{align}
In \citet{kirchner16d}, we find that, for large $T$, small $\Delta$ and large $p$, the entries of $\widehat{\mathbf{H}}$ are approximately jointly normally distributed around the true values. Furthermore, the covariance matrix of $\mathrm{vec}\left(\widehat{\mathbf{H}}^\top\right)\in\R^{d(dp + 1)}$ ($\mathrm{vec}(\cdot)$ stacks the columns of its argument) can be consistently estimated by

\begin{equation}
\widehat{S^2}:=\frac{1}{\Delta^2}\left(\left(\bfZ^\top\bfZ\right)^{-1}  \otimes 1_{d\times d}\right) 
{\bf W}
\left(\left(\bfZ^\top\bfZ\right)^{-1}  \otimes 1_{d\times d}\right)\in \R^{d(dp+1) \times d(dp+1)}.\label{cov_estimator}
\end{equation}
Here, $\otimes$ denotes the Kronecker product, $\bfZ$ is the design matrix from \eqref{design_matrix} and ${\bf W}:= \sum_{k={p+1}}^n{\bfw_k \bfw_k^\top}\in \R^{d(dp+1) \times d(dp+1)}$, where, for $k= p+1, p+2, \dots, n$,
\begin{eqnarray}
\bfw_k
&:=& \left(\left(\left(\bfX^{(\Delta)}_{k-1}\right)^\top, \left(\bfX^{(\Delta)}_{k-2}\right)^\top,\dots ,\left(\bfX^{(\Delta)}_{k-p}\right)^\top,1\right)^\top\otimes 1_{d\times d}\right)\\
&&\hspace{4cm}\cdot\,\left(\bfX^{(\Delta)}_k - \Delta\hat{\bfeta} - \sum\limits_{l=1}^p\Delta \widehat{H}_l^\top \bfX^{(\Delta)}_{k-l} \right)\in\R^{d(dp  + 1)\times 1}.\nonumber
\end{eqnarray}

In Definition~\ref{estimator}, we consider  $\vec(\bf{H}^\top)$ instead of $\vec(\bf{H})$ in order to apply the results from \citet{kirchner16c} more directly; see Remark~\ref{ij}. We will discuss below how one retrieves specific values from the covariance matrix estimation in~\eqref{cov_estimator}. 
The estimator from Definition~\ref{estimator} above depends on a support $s,\,0<s<<T,$ and on a bin size $\Delta,\,0<\Delta\leq s$. Automatic methods for the choice of these estimation parameters are discussed in \citet{kirchner16c}. In the present paper, we assume $s$ given. Often, an upper bound for the support of the reproduction intensities can be guessed from the data context. The choice of $\Delta$, however, will be crucial in high-dimensional settings. We will use it as a tuning parameter for controlling numerical complexity.

\subsection{Estimation of the Hawkes skeleton}\label{estimation_of_the_hawkes_skeleton}
Our first goal is to identify the edges of the Hawkes skeleton from data; see Definition~\ref{hg}. The idea is simple: for $(i,j) \in [d]^2$, we estimate the edge weight $a_{i,j} = \int h_{i,j}(t)\d t$ by
$\hat{a}_{i,j}  := \Delta \sum_{k = 1}^p \hat{h}_{i,j} (k\Delta)$; see \eqref{estimator_interpretation} for the notation. Calculating the covariance estimate \eqref{cov_estimator}, we can check whether $\hat{a}_{i,j}$ is significantly larger than zero. If this is the case, we set $(i,j)\in\widehat{\mathcal{E}}^*$.
In order to ease implementation, we explicitly give the necessary transformations for the estimates from Definition~\ref{estimator} and discuss numerical issues. 

\begin{definition}\label{skeleton_estimator}
Given $d$-type event-stream data on $(0,T]$, calculate the Hawkes estimator $\bfH^{(\Delta_{\text{skel}}, s)}$ from Definition~\ref{estimator} with respect to some $s,\, 0< s <T,$ and some $\Delta_{\text{skel}},\,0 < \Delta_{\text{skel}} \leq s$. For $j\in[d]$, let  $b_j\in\{0,1\}^{(dp + 1)\times 1}$ be column vectors with all entries zero but 1s at entries $(k - 1)d + j,\, k=1,2,\dots, p = \lceil s / \Delta_{\text{skel}} \rceil$. Let $B:= (b_1,b_2,\dots,b_d)^\top$, and calculate
\begin{align}
\(\hat{a}_{i,j}\)_{1\leq i,j\leq d} = \Delta_{\text{skel}} B\bfH^{(\Delta_{\text{skel}}, s)}.\label{a_ij}
\end{align}
Fix $\alpha_{\text{skel}}\in(0,1)$ and define the
 \emph{Hawkes-skeleton estimator} as a graph $\widehat{\mathcal{G}}^* := ([d],\widehat{\mathcal{E}}^* ),$ with
\begin{align}
  \widehat{\mathcal{E}}^*:= \Big\{(i,j) \in [d]^2:\ \hat{a}_{i,j} > \hat{\sigma}_{i,j} z^{-1}_{1 - \alpha_{\text{skel}}} \Big\}.\label{edge_estimator}
\end{align}
Here, for $\beta\in(0,1)$, $z^{-1}_{\beta}$ denotes the $\beta$-quantile of a standard normal distribution. Efficient calculation of $(\hat{\sigma}_{i,j})_{1\leq i,j\leq d}$ will be given in Algorithm~\ref{sigma_comp_alt} below.
\end{definition}
The main point of this first estimation step is that we hope that the edge set $|\mathcal{E}^*|$ and, consequently  $|\widehat{\mathcal{E}}^*|$ are typically much smaller than $d^2$, respectively, that
$
\PA_{{\bfN}}(j),\, j\in[d],
$
and, consequently, 
$
\widehat{\PA}_{\bfN}(j),\, j\in[d],
$
are typically much smaller than $d$. If this is the case, the knowledge of the skeleton simplifies the estimation of the Hawkes graph considerably.
\subsubsection*{The role of $\Delta_{\text{skel}}$} On the one hand, the smaller we choose the bin size $\Delta$, the better the discrete approximation described in Section~\ref{earlier_results} works. On the other hand, the matrices involved in the calculation of the Hawkes estimator from Definition~\ref{estimator} become increasingly large when $\Delta$ decreases. More specifically, \eqref{estimator_calculation} involves the construction and multiplication of matrices with about $ ds/\Delta$ rows and  about $T/\Delta$ columns, where $T>0$ denotes the sample window size, $d\in\N$ the number of event-types, and $s, \, \Delta \leq s <<T$, the support parameter from Definition~\ref{estimator}. Furthermore, we have to invert matrices of size $ \lceil ds/\Delta\rceil \times \lceil ds/\Delta\rceil$. The crucial observation is that in the Hawkes-skeleton estimation, we may choose $\Delta_{\text{skel}}$ quite large for two reasons:
\begin{enumerate}[label = \roman*)]
\item The test involved in \eqref{edge_estimator} does not depend on $\Delta_{\text{skel}}$ too heavily. The false positive rate (that is, the probability of \emph{including a false edge}) is well controlled by $\alpha_{\text{skel}}$, because, under $H_0: h_{i,j}\equiv 0$, discretizations as in~\eqref{errors} stay `correct' even for very coarse $\Delta_{\text{skel}}$; see~\eqref{eq:prob} below.
The false negative rate (probability of \emph{missing a true edge}) naturally depends strongly on the true underlying edge weights. However, if there is truly considerable direct excitement from one component to another, then typically the effect from some bin to future bins will also be of some significance---which is exactly what our skeleton estimator tests. Our simulation study in Section~\ref{simulation_study} confirms these arguments. 
\item  The actual \emph{quantitative} estimation of the interactions between different event types will be performed in a second step when we consider the Hawkes \emph{graph}. In this second step, due to the (hoped-for) sparseness of the Hawkes skeleton, we are typically able to choose a much finer bin size $\Delta_{\text{graph}}$. So we may ignore the bias stemming from a somewhat rough discretization in the first (skeleton-estimation) step. 
\end{enumerate}
By choosing $\Delta_{\text{skel}} = s/k$ for some small $k\in\N$ in the calculations of Definition~\ref{skeleton_estimator} above, even Hawkes-skeleton estimates of very high-dimensional models (such as $d>20$) become computationally tractable. 

\subsubsection*{The role of $\alpha_{\text{skel}}$} 
Note that under
 $H_0:\, a_{i,j} \equiv 0$, we have that
\begin{align}\label{eq:prob}
{\P}_{H_0}[\hat{a}_{i,j} > \hat{\sigma}_{i,j}^2 z^{-1}_{1 - \alpha_{\text{skel}}} ] \approx \alpha_{\text{skel}}.
\end{align}
Still, the parameter $\alpha_{\text{skel}}\in (0,1)$ should not so much be thought of as an actual significance level---due to the multiple testing setup over $(i,j)\in[d]^2$, and because of the dependence between the different edge tests. Despite this warning, note that in the simulation study from Section~\ref{simulation_study}, 
the corresponding empirical false positive rates are very close to our (varying) choices of $\alpha_{\text{skel}}$. In any case, $\alpha_{\text{skel}}$ is a flexible tuning parameter that allows for controlling the degree of sparseness in the estimated graph. A value of $\alpha_{\text{skel}} = 1$ will yield a fully connected estimated graph as Hawkes skeleton. When $\alpha_{\text{skel}}$ decreases, the skeleton estimate becomes sparser and sparser. For $\alpha_{\text{skel}} \geq 0.01$, we typically  still \emph{overestimate} the true edge set. In other words, for $j\in[d]$, we typically have that $\PA_{\bfN}(j)\subset\widehat{\PA}_{\bfN}(j)$ with high probability. 

\subsubsection*{Variance estimate calculation}
The most elaborate step from a computational point of view in Definition~\ref{estimator} is the calculation of the covariance estimator in \eqref{cov_estimator}. Here, we deal with matrices of size $ \lceil d^2s/\Delta\rceil \times \lceil d^2s/\Delta\rceil$. Furthermore, we have to calculate approximately $T/\Delta$ vectors of size $d^2s/\Delta$ and calculate and sum their crossproducts $\bfw_k\bfw_{k}^\top$. This is the numerical bottleneck of the procedure---in particular for high-dimensional setups. For the Hawkes-skeleton estimator from Definition~\ref{skeleton_estimator}, we simplify the calculation. First of all, we note that in the matrix $\widehat{S}^2$ from \eqref{cov_estimator}, we estimate many more covariance values than we actually need for the (marginal) distribution of the edge-weight estimates. After some linear algebra, we find that one can avoid the tedious computation of the ${\bf W}$ matrix from (\ref{cov_estimator}) by the following matrix manipulations.

\begin{algorithm}\label{sigma_comp_alt}
Let $\bfE\in\{0,1\}^{d^2\times(d^2p +d)}$ be a matrix with all entries zero but, for $(i,j) = [d]^2$, in row
$(i - 1) d + j$ we have 1s at entries $(k-1)d^2 + (i-1)d
+j,\ k=1,2,\dots,p.$ Let $\bfE_{l,\cdot}$ denote the $l$-th row of $\bfE$.
With $\widehat{S}^2$ from \eqref{cov_estimator} and for $(i,j)\in[d]^2$, we have that
$
\hat{\sigma}_{i,j}^2:=\Delta^2 \bfE_{(i - 1) d + j,\cdot}^\top \widehat{S}^2 \bfE_{(i - 1) d + j,\cdot}
$
are the variance estimates for the $\hat{a}_{i,j}$ from $\eqref{a_ij}$.
These estimates can be computed in the following way:
\begin{enumerate}[label = \roman*)]
\item Compute $\bfE \big(\bfZ^\top \bfZ)^{-1} \bfZ^\top \otimes 1_{d\times d}\in\R^{d^2\times d(n-p)}$
and stack the rows of the result in a vector. Fill this vector row-wise in a $d^2(n-p)\times d$ matrix $\mathbf{C}$.
\item Set 
$
\bfU =  (\bfY - \Delta  \bfZ  \widehat{\bfH})\in\R^{(n-p)\times d}$. Denoting $ (U_{p+1}, U_{p + 2},\dots, U_n)^\top:= \bfU$, we now have that
$$
U_k = \left(\bfX^{(\Delta)}_k - \Delta\hat{\bfeta} - \sum\limits_{l=1}^p\Delta \widehat{H}_l^\top \bfX^{(\Delta)}_{k-l} \right),\quad k = {p+1}, p +2,\dots, n.
$$
Furthermore, let $\bfU^{\text{(rep)}}\in\R^{d^2(n-p)\times d}$ be a matrix consisting of $d^2$ repetitions of the $\bfU$ matrix stacked on top of each other.
\item  Multiply $\mathbf{C}$ from (i) pointwise with $\bfU^{\text{(rep)}}$ from (ii) and square the row sums of the resulting matrix. Row-wise fill the resulting vector into a $d^2\times (n-p)$ matrix and compute the row sums of this matrix.
\item Row-wise fill the result from (iii) into a $d\times d$ matrix. This yields $\(\hat{\sigma}_{i,j}^2\)_{1 \leq i,j\leq d}.$
\end{enumerate}
\end{algorithm}

\subsection{Estimation of the Hawkes graph}\label{estimation_of_the_hawkes_graph}
Given an estimate $\widehat{\mathcal{G}}_{\bfN}^*$ of the Hawkes skeleton $\mathcal{G}_{\bfN}^*$ from Definition~\ref{skeleton_estimator}, we consider the estimation of the Hawkes graph $\mathcal{G}_{\bfN}$; see Definition~\ref{hg}. We aim to estimate vertex as well as edge weights, and to calculate corresponding confidence bounds for both. That is, after the more structural Hawkes-skeleton estimation from Section~\ref{estimation_of_the_hawkes_skeleton}, we now \emph{quantify} the various interactions between the observed event streams. Typically, after the skeleton estimation, we can reduce the effective dimensionality of the model considerably: in a first obvious step, we divide the skeleton $\widehat{\mathcal{G}}_{\bfN}^*$ into its weakly-connected subgraphs and treat them separately. In a second step, we identify $\widehat{\PA}_{\bfN}(j):=\{i \in \mathcal{V}_{\bfN}:\, (i,j)\in\widehat{\mathcal{E}}_{\bfN}^*\}$ for all $j \in \mathcal{V}_{\bfN}$.
From the branching construction of a Hawkes process, respectively, of Hawkes families in Definitions~\ref{hawkes_family} and \ref{def:hawkes_process}, we have that any event in component $j$ is either an immigrant stemming from a Poisson random measure with constant intensity $\eta_j$ or has a direct explanation through an event in one of  its parent components $\PA_{\bfN}(j)$. That is, in a multivariate version of \eqref{errors}, \emph{it suffices to regress the bin-counts of component $j$ on the bin-counts in $\PA_{\bfN}(j)$}. The constant term in this regression will represent the $j$-th immigration intensity. Considering only the parents instead of all of the $d$ other components in the conditional-least-squares regression increases numerical efficiency and decreases estimation variance. 
In applications, however, we \emph{do not know} the true parent set $\PA_{\bfN}(j)$. So, we have to substitute ${\PA_{\bfN}}$ with the estimate $\widehat{\PA}_{\bfN}$. As long as $\PA_{\bfN}(j)\subset \widehat{\PA}_{\bfN}(j)$ this is not an issue: from the branching construction, we have that the intensity at time $t$ of component $j$,
conditional on $\sigma(N_i(A):\, A\in\B((-\infty,t]),\, i\in\PA_{\bfN}(j))$, is independent of the past of all other components $\sigma(N_i(A):\, A\in\B((-\infty,t]),\, i\notin\PA_{\bfN}(j))$. Consequently, additional vertices in the estimated parent sets do not introduce additional bias in this graph estimation.
Apart from this restriction of the regression variables on (estimated) parent types, we apply the conditional-least-squares approach as in Definition~\ref{estimator}. This time however, due to reduction of dimensionality, we will typically \emph{be able to choose a much smaller bin size} $\Delta_{\text{graph}}$ than for the skeleton estimation before. To ease implementation, below we give convenient notations and the necessary calculations.
\par
First, we drop the $\bfN$ subscript for the parent sets $\PA(j)$. Also, we write $\PA(j)$ instead of $\widehat{\PA}(j)$---keeping in mind that the first has to be substituted by the latter in most applications.
 For $k=1,2,\dots,n$, $j\in[d]$ and some $0 < \Delta_{\text{graph}}<<\Delta_{\text{skel}}$, let $\bfX^{(\Delta_{\text{graph}})}_{k,j}:= N_j\big(((k- 1)\Delta_{\text{graph}}, k\Delta_{\text{graph}}]\big)$, $d_j := |\PA(j)|$, and
\begin{align}
\bfX^{(\Delta_{\text{graph}})}_{k,\PA(j)} := \(\bfX^{(\Delta_{\text{graph}})}_{k,i_1}, \bfX^{(\Delta_{\text{graph}})}_{k,i_2}, \dots, \bfX^{(\Delta_{\text{graph}})}_{k,i_{d_j}}\)^\top. \label{notation}
\end{align}
 In \eqref{notation} and in what follows, we denote
$\{i_1,i_2,\dots,i_{d_j}\} := \PA(j)$ such that $i_1<i_2<\dots<i_{d_j}$. The idea is to regress all the bin counts of all $d$ event types separately on the past of their parents with Ansatz
\begin{align}
\E\[{\bf X}^{(\Delta_{\text{graph}})}_{n,j}\Big| {\bf X}^{(\Delta_{\text{graph}})}_{n-k,\PA(j)}, \, k= 1,2,\dots, p\] = \alpha^{(\Delta_{\text{graph}})}_{0,j} +  \sum\limits_{i \in \PA(j)}\sum\limits_{k = 1}^p\alpha^{(\Delta_{\text{graph}})}_{k,i,j}{\bf X}^{(\Delta_{\text{graph}})}_{n-k,i},\quad j\in[d].\label{mv_reg}
\end{align}
Ansatz \eqref{mv_reg} should be compared with \eqref{errors}.
Note that $j$ itself may or may not be an element of $\PA(j)$. 
\begin{definition}\label{def:graph_estimator}
Let $\mathcal{G}^*_{\bfN}$ be a Hawkes skeleton (estimate) with respect to some $d$-type Hawkes process (data) $\bfN$. Given $d_j := |\PA(j)|,\, j\in [d]$, a bin size $\Delta_{\text{graph}}>0$, a  support $s$ with $0 < \Delta_{\text{graph}} \leq s < T$, and  $p:= \lceil s/\Delta_{\text{graph}} \rceil$, calculate the conditional-least-squares estimates
\begin{align}
 \widehat{\bf{H}}_j^{{(\Delta_{\text{graph}},s)}}:=  \frac{1}{\Delta_{\text{graph}}} \left(\bfZ_j^\top \bfZ_j\right)^{-1} \bfZ_j^\top \bfY_j\in \R^{(pd_j +1)\times 1} ,\quad   j\in[d_j],\label{H_j}
 \end{align}
with design matrices
\begin{align}
\bfZ_j &:=
\left(\begin{array}{ccccc}(\mathbf{X}^{(\Delta_{\text{graph}})}_{p,\PA(j)})^\top & (\mathbf{X}^{(\Delta_{\text{graph}})}_{p-1,\PA(j)})^\top& \dots   & (\mathbf{X}^{(\Delta_{\text{graph}})}_{1,\PA(j)})^\top &1 \\
(\mathbf{X}^{(\Delta_{\text{graph}})}_{p+1,\PA(j)})^\top & (\mathbf{X}^{(\Delta_{\text{graph}})}_{p,\PA(j)})^\top & \dots  & (\mathbf{X}^{(\Delta_{\text{graph}})}_{2,\PA(j)})^\top & 1\\
 \\\dots  & \dots  & \dots  & \dots  &\dots \\
 \\
(\mathbf{X}^{(\Delta_{\text{graph}})}_{n - 1,\PA(j)})^\top & (\mathbf{X}^{(\Delta_{\text{graph}})}_{n-2,\PA(j)})^\top & \dots    & (\mathbf{X}^{(\Delta_{\text{graph}})}_{n-p,\PA(j)})^\top &1
\end{array}\right)\in\N_0^{ (n-p) \times (pd_j+1)}\label{Z_j},\quad j\in[d],
\end{align}
and vectors of responses
$$
\bfY_j :=\left(\mathbf{X}^{(\Delta_{\text{graph}})}_{p+1,j},\mathbf{X}^{(\Delta_{\text{graph}})}_{p+2,j},\dots ,\mathbf{X}^{(\Delta_{\text{graph}})}_{n,j}\right)^\top\in\N_0^{(n-p)\times1 },\quad j\in[d].
$$
Given  $\widehat{\bf{H}}_j^{{(\Delta_{\text{graph}},s)}}$, $j\in [d],$ we define the \emph{Hawkes-graph estimator} $\widehat{G}_{\bfN}:=(\widehat{\mathcal{V}}_{\bfN},\widehat{\mathcal{E}}_{\bfN})$ with
$\widehat{\mathcal{V}}_{\bfN} := \{(j;\hat{\eta}_j):\ j\in[d]\}$ and
\begin{align}
\widehat{\mathcal{E}}_{\bfN} :=\bigcup\limits_{j = 1,\dots,d} \Big\{(i_l,j; \hat{a}_{i_l,j}):\ \{i_1,\dots,i_{d_j}\} = {\PA}(j) ,\,\hat{a}_{i_l,j} =b_{l,j}^\top\widehat{\bf{H}}
^{{(\Delta_{\text{graph}},s)}}_j  \Big\},\label{H_j_calculation}
\end{align}
where, for $l\in[d_j]$, $b(l,j)\in\{0,1\}^{(d_jp + 1)\times 1}$ is a column vector with $0$s in all components but 1s in components $((k - 1)d_j + l),\ k=1,2,\dots,p$. Furthermore, for $\alpha_{\text{graph}}\in(0,1)$, we define the confidence intervals
$\big[\hat{\eta}_j\pm \hat{\sigma}_{j}z^{-1}_{1-\alpha_{\text{graph}}}\big),\, j\in[d],$ and, for $i_l\in\PA_{\bfN}(j)$,
$\big[\hat{a}_{i_l,j}\pm \hat{\sigma}_{i_l,j} z^{-1}(1-\alpha_{\text{graph}})\big)$. We give the calculation of $
\hat{\sigma}_{i_l,j}$ and $\hat{\sigma}_{j}$ in Algorithm~\ref{sigma_comp2}, below.
\end{definition}
As before, additional notation clarifies what the entries of the matrices $\widehat{\bf{H}}^{{(\Delta_{\text{graph}},s)}}_j,\, j\in [d],$ actually estimate:
\begin{align}
\left(
\begin{array}{c}
\widehat{H}_{\PA(j),j}(\Delta_{\text{graph}})  \\
\widehat{H}_{\PA(j), j}(2\Delta_{\text{graph}})  \\
 \dots \\
 \widehat{H}_{\PA(j), j}(p\Delta_{\text{graph}}) \\
 \hat{\eta}_j
 \end{array}
 \right)
 &:=\widehat{\bf{H}}_j,\ \text{with}\label{estimator_interpretation2}   \\
 \widehat{H}_{\PA(j), j}(k\Delta_{\text{graph}}) &= \(\hat{h}_{i_1, j}(k\Delta_{\text{graph}}),\hat{h}_{i_2, j}(k\Delta_{\text{graph}}), \dots,\hat{h}_{i_{d_j}, j}(k\Delta_{\text{graph}})\)^\top,
 \nonumber
\end{align}
$k = 1,2,\dots, p$ and $\{i_1, i_2, \dots ,i_{d_j}\} = \PA(j)$. 
%
Finally, we provide efficient computations for the covariance estimates that are necessary for the confidence intervals around the estimated edge and vertex weights.
\begin{algorithm}
\label{sigma_comp2}
Let $j\in[d]$ such that $|\PA(j)|>0$ and let $\{i_1,i_2,\dots,i_{d_j}\}=\PA(j)$ with $i_1<i_2<\dots,<i_{d_j}$.
For $(i_l,j),\, l \in [d_j],$ let $e(i_l,j)\in\{0,1\}^{(d_j p +1)\times 1}$ be a column vector with all entries 0, but 1s at components $(k-1)d_j + (l-1),\ k=1,2,\dots,p$. 
We compute $\hat{\sigma}_{i_l,j}$ in the following way:
\begin{enumerate}[label = \roman*)]
\item Compute $\mathbf{ C}_{l,j} := e(i_l,j)^\top \((\bfZ_j^\top \bfZ_j)^{-1} \bfZ_j^\top\)\in\R^{1\times (n-p)}$.
\item Set 
$
\bfU_j =  (\bfY_j - \Delta_{\text{graph}}  \bfZ_j  \widehat{\bfH}_j)\in\R^{(n-p)\times 1}$. Denoting $ (U_{p+1,j}, U_{p + 2,j},\dots, U_{n,j})^\top:= \bfU_j$, we have that
$$
U_{k,j} = \left(\bfX^{(\Delta_{\text{graph}})}_{k,j} - \Delta_{\text{graph}}\hat{\bfeta} - \sum\limits_{m=1}^p\Delta_{\text{graph}} \widehat{H}_{\PA(j),j}^\top(m\Delta_{\text{graph}}) \bfX^{(\Delta_{\text{graph}})}_{k-m,\PA(j)} \right),
$$
for $ k = {p+1}, p +2,\dots, n$.
\item 
Pointwise multiply $\mathbf{C}_{l,j}$ and $\bfU_j$. The sum of the squares of the result yields $\hat{\sigma}_{i_l,j}^2\in\R_{\geq 0}$.
 \end{enumerate}
For the variance estimates corresponding to the $j$-th vertex weight, 
consider the last row of $\((\bfZ_j^\top \bfZ_j)^{-1} \bfZ_j\)\in\R^{(d_j p + 1)\times (n-p)}$, multiply it pointwise with $\bfU_j$ from above, take the sum of squares of the results and multiply the result with $\Delta_{\text{graph}}^{-2}$; this yields $\hat{\sigma}_j^2$.
 \end{algorithm}

\begin{remark}
The bin size $\Delta_{\text{graph}}$ for the graph estimation in Definition~\ref{def:graph_estimator} will typically be much smaller than the bin size $\Delta_{\text{skel}}$ for the skeleton estimation in Definition~\ref{skeleton_estimator}. After the graph estimation, one might again want to delete edges with edge-weight estimates non-significantly different from zero, or treat vertex-weight estimates, respectively, immigration intensities, that are not significantly different from zero as zero; see Figure~\ref{fig2}. Also note that the latter could possibly be tested with a different significance parameter $\alpha_{\text{vertex}}$ than the significance parameter $\alpha_{\text{graph}}$ from the edge weight estimation. In any case, the resulting Hawkes-graph estimations ought to be checked for \emph{redundant vertices}; see Definition~\ref{def:graph}. If the estimate has redundant vertices, the results are typically inconsistent with the data---as we typically observe data in all components. Therefore, if a fitted model has redundant vertices, we ought to increase $\alpha_{\text{skel}}$, $\alpha_{\text{graph}}$, and/or $\alpha_{\text{vertex}}$. Thus, we obtain more estimated nonzero immigration intensities and/or larger estimated edge sets. We proceed with increasing the significance parameters until there are no redundancies left.  
\end{remark}
Given a Hawkes-graph estimate as in Definition~\ref{def:graph_estimator}, one may examine connectivity issues, path weights, graph distances, feedback and cascade coefficients, exploit graphical representations, etc.; see the example in Section~\ref{example}.

\subsection{Estimation of the reproduction intensities}\label{excitement_function_estimation}
For many applications, the results discussed above may already suffice. In other applications however, the graph estimation will only be a preliminary step and one would like to examine how the various excitements are distributed \emph{over time}. In other words, one would like to explicitly estimate the displacement intensities, respectively, the reproduction intensities from Definition~\ref{hawkes_family}.
\subsubsection*{Parametric estimation}  Given the Hawkes estimator from Definition~\ref{estimator},
the Hawkes model is not yet completely specified. In particular, \eqref{H_j} only yields estimates of the reproduction intensities on a grid:
\begin{align}
\bigg\{\Big(k\Delta\Big),\hat{h}_{i,j}(k\Delta)\bigg\}_{k = 1,2,\dots, p},\quad i\in\widehat{\PA}(j),\ j\in[d].\label{estimates}
\end{align}
One obvious possibility to complete the model specification would be the application of any kind of smoothing method on \eqref{estimates}. We want to consider another approach: we exploit \eqref{estimates} graphically (examine log/log-plots, id/log-plots, check for local maxima, convex/concave regions, etc.) and 
identify appropriate parametric families.  The parameters can then be fitted to the estimates \eqref{estimates} via non-linear least-squares (e.g., function {\tt nls} in {\tt R}):
\begin{definition}\label{def:parametric_estimation}
Consider a Hawkes-graph estimation as in Definition~\ref{def:graph_estimator} with respect to some $d$-type event-stream data and a bin size $\Delta_{\text{graph}}>0$. For $j\in[d]$ and $i\in\widehat{\PA}(j)$, let $w_{i,j}^{(\theta_{i,j})}:\,\R\to\R_{\geq 0},\, w_{i,j}^{(\theta_{i,j})}(t) = 0,\, t\leq 0$, be density families parametrized by $\theta_{i,j}\in \Theta_{i,j}\subset \R^{d_{i,j}}$. With the notation from \eqref{estimates}, let
\begin{align}
(\hat{a}_{i,j},\hat{\theta}_{i,j}) := \argmin\limits_{(a,\theta)\in \R_{\geq 0}\times \Theta_{i,j}}\sum\limits_{k = 1}^p \Big(a w_{i,j}^{(\theta)}\big(k \Delta_{\text{graph}}\big) -\hat{h}_{i,j}(k\Delta_{\text{graph}}) \Big)^2,\quad (i,j)\in \widehat{\mathcal{E}}^*, \label{eq:theta_ij}
\end{align}
and define the \emph{parametric reproduction-intensity estimates}
$$
\hat{h}^{\text{(par)}}_{i,j}(t):=\begin{cases}
\hat{a}_{i,j}{w}^{(\hat{\theta}_{i,j})}_{i,j}(t),& (i,j)\in \widehat{\mathcal{E}}^*,\quad t\in\R,\\
0,  &(i,j)\notin \widehat{\mathcal{E}}^*,\quad t\in\R,
\end{cases} 
$$
the \emph{parametric branching-matrix estimate}
$$
\widehat{A}^{\text{(par)}} := \( \int\hat{h}^{\text{(par)}}_{i,j}(t)\d t\)_{1\leq i, j \leq d},
$$
and the \emph{parametric immigration-intensity estimates}.
\begin{align}
\hat{\eta}^{\text{(par)}} := \(\hat{\eta}^{\text{(par)}}_1,\dots,\hat{\eta}^{\text{(par)}}_d\):=
\lambda^{\text{(emp)}}\(1_{d\times d} -\widehat{A}^{\text{(par)}}\),\label{eta_par_est}
\end{align}
where $\lambda^{\text{(emp)}}$ denotes the observed empirical intensity  $\lambda^{\text{(emp)}}:=\bfN\big((0,T]\big)/T\in\R_{\geq 0}^{1\times d}$. \end{definition}
We illustrate this specification and estimation of a fully parametric multivariate Hawkes process in Figure~\ref{fig3}. Here, we also see that the parameter estimates from \eqref{eq:theta_ij} are symmetrically distributed around the true values. Even though the estimator calculations in Definition~\ref{def:parametric_estimation} stand at the end of a long chain of various discretizations and truncations, `log-likelihood profile' confidence intervals (e.g., from {\tt confint.nls} in {\tt R}) give remarkably good coverage rates for the parameter estimates (not illustrated).
\begin{remark}
The definition of $\eta^{\text{(par)}}$ in $\eqref{eta_par_est}$ is motivated by the desirable equality
$$
 \eta^{\text{(par)}} \(1_{d\times d} -(\widehat{A}^\text{(par)})^\top\)^{-1} =\lambda^{\text{(emp)}}.
$$
In other words, with this choice of $\hat{\eta}^{\text{(par)}}$, the observed unconditional intensity exactly equals the estimated unconditional intensity. This might be relevant in some applications (e.g., simulation from a fitted model). Finally note that it might often be more efficient to consider weighted least squares in \eqref{eq:theta_ij}. 
 \end{remark}
\section{Example}\label{example}

We illustrate the concepts introduced in the previous sections with a ten-dimensional Hawkes model. We perform a simulation study and apply the estimation methods from Sections~\ref{estimation_of_the_hawkes_skeleton}, \ref{estimation_of_the_hawkes_graph}, and \ref{excitement_function_estimation}  to the Hawkes skeleton, the Hawkes graph, and the reproduction-intensity parameters.
\subsection{Example model}\label{example_model}
We consider a $10$-type Hawkes process $\bfN$ as in Definition~\ref{def:hawkes_process}  with immigration intensities 
\begin{align}\label{immigration_intensities}
\eta_i: =
\begin{cases}
1, & i\in\{1,7,10\},\\
0, & i\in\{2,3,4,5,6,8,9\},
\end{cases}
\end{align}
and reproduction intensities $h_{i,j},\, (i,j)\in[10]^2$, defined, for $t\in\R$, by
\begin{align}\label{reproduction_intensities}
\quad h_{i,j}(t) :=
\begin{cases}
1.5\,\gamma(t),& (i,j)\in\{(1,2),(2,4),(8,9)\},\\
 1_{t\in[1,2]}0.5,& (i,j)\in\{(1,1),(2,3),(3,5),(4,3), (4,5),(4,6),(5,3),(7,8),(9,7)\},\\
 1_{t\in[1,2]}0.1,&(i,j) = (5,7),\\
0,& \text{else}.
\end{cases}
\end{align}
Here, $\gamma$ denotes a Gamma density with shape parameter 6 and rate parameter 4, i.e.,
$\gamma(t)=1_{t\geq 0}t^{5} \exp\{-4t\} (4^6)/(5!)$. In Hawkes graph terminology, we have 13 edges supplied with three different kinds of edge weights: a \emph{heavy weight} (1.5) for three edges, a \emph{light weight} (0.5) for seven edges, and one edge with a \emph{super-light weight} (0.1). An illustration of the corresponding graph $\mathcal{G}_{\bfN}$ is much more meaningful than \eqref{reproduction_intensities}; see the left graph in Figure~\ref{fig1}. From this figure, the various direct and indirect dependencies can be read off instantaniously; only the large nodes have nonzero immigration intensity; a fat edge corresponds to an edge weight of 1.5; a thin edge corresponds to an edge weight of 0.5; the dashed line corresponds to the super-light edge weight 0.1. We examine the Hawkes-graph properties introduced in Definitions~\ref{hg}~and~\ref{coefficients}:\\
\vspace{0.5cm}\strut
\\
{\bf Redundancy} The Hawkes graph $\mathcal{G}_{\bfN}$ has no \emph{redundant vertices}: all small vertices have a large vertex as one of their ancestors. If vertex 1 were small, the vertices $1,2,3,4,5$ and $6$ would be \emph{redundant} as they could not generate events.\\

\noindent {\bf Connectivity} The Hawkes graph $\mathcal{G}_{\bfN}$ is \emph{not weakly connected}. The graph can be divided in two separate weakly-connected Hawkes subgraphs with vertex sets $\{1,2,3,4,5,6,7,8,9\}$, and 
$\{10\}$. Deleting edge $(5,7;0.1)$ would yield three separate weakly-connected Hawkes subgraphs.\\

\noindent{\bf Criticality} The Hawkes graph $\mathcal{G}_{\bfN}$ is \emph{subcritical}:
all vertices but vertex 10 are part of closed walks. It suffices to check criterion\ \eqref{subcriticality} for vertices $i_0 \in\{1,2,3, 7\}$.
For vertex $1$, we find that 
$$
\mathcal{W}_g^{(1,1)} = \{(\underbrace{1,1,\dots,1}_{g + 1 \text{ times}})\},\ g\in \N,
\quad
\text{and}\quad  |(\underbrace{1,1,\dots,1}_{g + 1 \text{ times}})| = 0.5^g,\, g\in\N.
$$
Consequently, 
$
\sum_{g = 1}^\infty \sum_{w_g\in\mathcal{W}_g^{(1,1)}} |w_g|= \sum_{g = 1}^\infty 0.5^g<\infty
$. For vertex $2$, we find that 
$$
\mathcal{W}_1^{(2,2)} =\mathcal{W}_2^{(2,2)} =  \emptyset,\, 
\mathcal{W}_3^{(2,2)}= \{(2,4,6,2)\}, \,\mathcal{W}_4^{(2,2)} =\mathcal{W}_5^{(2,2)} =  \emptyset, \,\mathcal{W}_6^{(2,2)}= \{(2,4,6,2,4,6,2)\}, \dots
$$
With $|(2,4,6,2)| = 1.5\cdot 0.5\cdot 0.5 = 0.375$, $|(2,4,6,2,4,6,2)| = 0.375^2,\,\dots,$ criterion \eqref{subcriticality} again follows. For vertices 3 and 7, one argues analogously. In other words, as long as closed walks do not overlap, we can construct large subcritical Hawkes graphs without calculating any eigenvalues. When closed walks overlap, the underlying combinatorics typically become too involved as to proceed in this manner. In this case one could calculate the spectral radius of the adjacency matrix of the involved edges only. For example, if we wanted to introduce another edge $(9,5; a_{9,5})$ in model \eqref{reproduction_intensities}, respectively, Figure~\ref{fig1}, we would have to calculate the spectral radius of the adjacency matrix corresponding to the Hawkes (sub-)graph with edges 
$$ 
\big\{(3,5;0.5),(5,3;0.5),(5,7;0.1),(7,8;0.5),(8,9;0.5),(9,5;a_{9,5}),(9,7;0.5)\big\};
$$
see Theorem~\ref{prop:graph_subcriticality}.\\

\noindent{\bf Cascade and feedback coefficients}
We calculate the coefficients from Definition~\ref{coefficients} with respect to the example model; see Table~\ref{table0}. 
The cascade and feedback coefficients summarize the impact of the driving vertices 1, 4 and 10 (that is, of the vertices with nonzero vertex weights) on the process. The {\it cascade coefficients} measure the impact of each vertex on the whole system. In our example, the immigrants in the first vertex together with the cascades that they trigger are responsible for about 82\% of all events that occur in the system. The {\it feedback coefficients} measure the impact of the impact of each vertex on itself. In our example, for vertex 8 this means that 76\% of its activity are explained by its own immigration activity and by the feedback loops that the immigrants possibly trigger via closed walks. Vertex 1 is only excited by its own activity. For vertex 10 the feedback coefficient is also equal 1---albeit there is no true feedback involved. Still, its intensity would decrease by 100\% if it were switched off.

\begin{table}[ht]
\caption{Cascade and feedback coefficients}
\label{table0}
\footnotesize
\centering
\begin{tabular}{rrrrrrrrrrr}
  \hline
 & 1 & 2 & 3 & 4 & 5 & 6 & 7 & 8 & 9 & 10 \\ 
  \hline
cascade.coefficients & 0.82 & 0.00 & 0.00 & 0.00 & 0.00 & 0.00 & 0.14 & 0.00 & 0.00 & 0.04 \\ 
  feedback.coefficients & 1.00 & 0.00 & 0.00 & 0.00 & 0.00 & 0.00 & 0.76 & 0.00 & 0.00 & 1.00 \\ 
   \hline
\end{tabular}
\end{table}


\subsection{Simulation study}\label{simulation_study}

\begin{table}
\footnotesize
\begin{center}
\caption{$\Delta_{\text{skel}} = 0.2$}
\label{table1}
\begin{tabular}{rrrrrrr}
  \hline
alpha.skel & nedges & total & heavy & light & super.light & zero \\ 
  \hline
0.005 & 12.324 & 0.902 & 1.000 & 0.956 & 0.121 & 0.993 \\ 
  0.010 & 13.066 & 0.917 & 1.000 & 0.970 & 0.190 & 0.987 \\ 
  0.050 & 17.296 & 0.946 & 1.000 & 0.990 & 0.379 & 0.942 \\ 
  0.100 & 21.995 & 0.959 & 1.000 & 0.995 & 0.507 & 0.890 \\ 
  0.250 & 35.015 & 0.979 & 1.000 & 0.999 & 0.739 & 0.744 \\ 
   \hline
\end{tabular}

\end{center}
\end{table}

\begin{figure}
\begin{center}
\includegraphics[width = 0.8\textwidth]{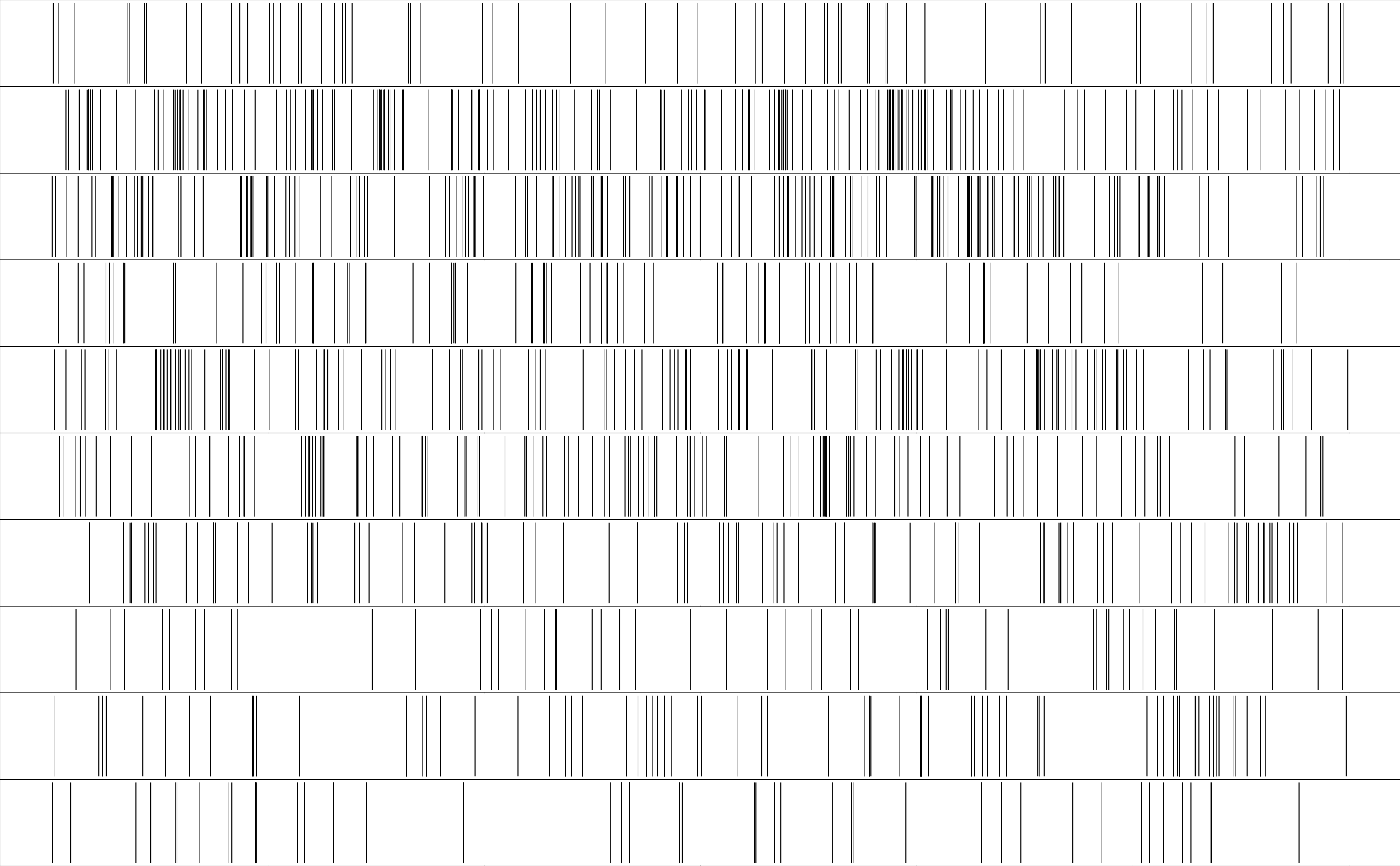}
\includegraphics[width = 0.48\textwidth]{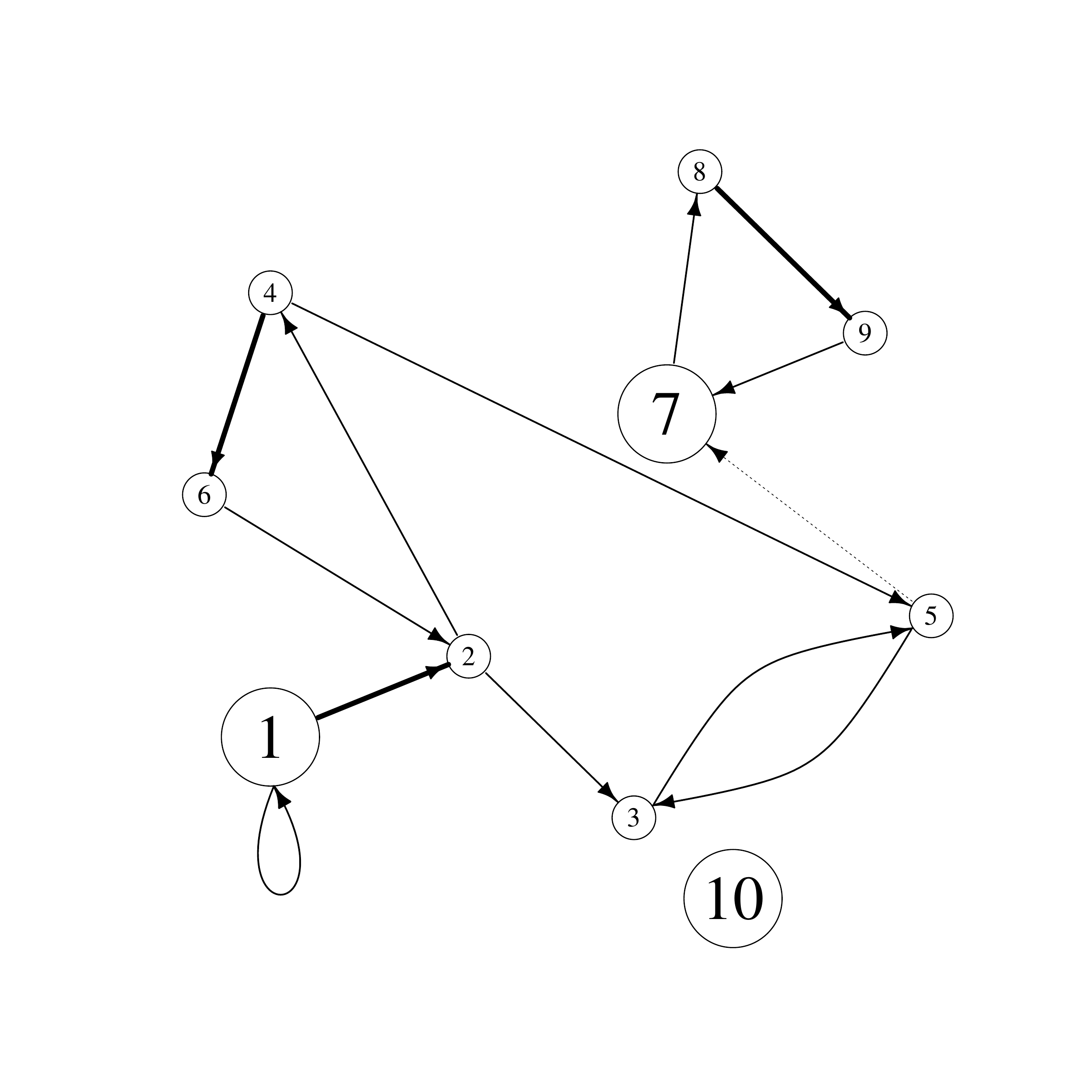}
\includegraphics[width = 0.48\textwidth]{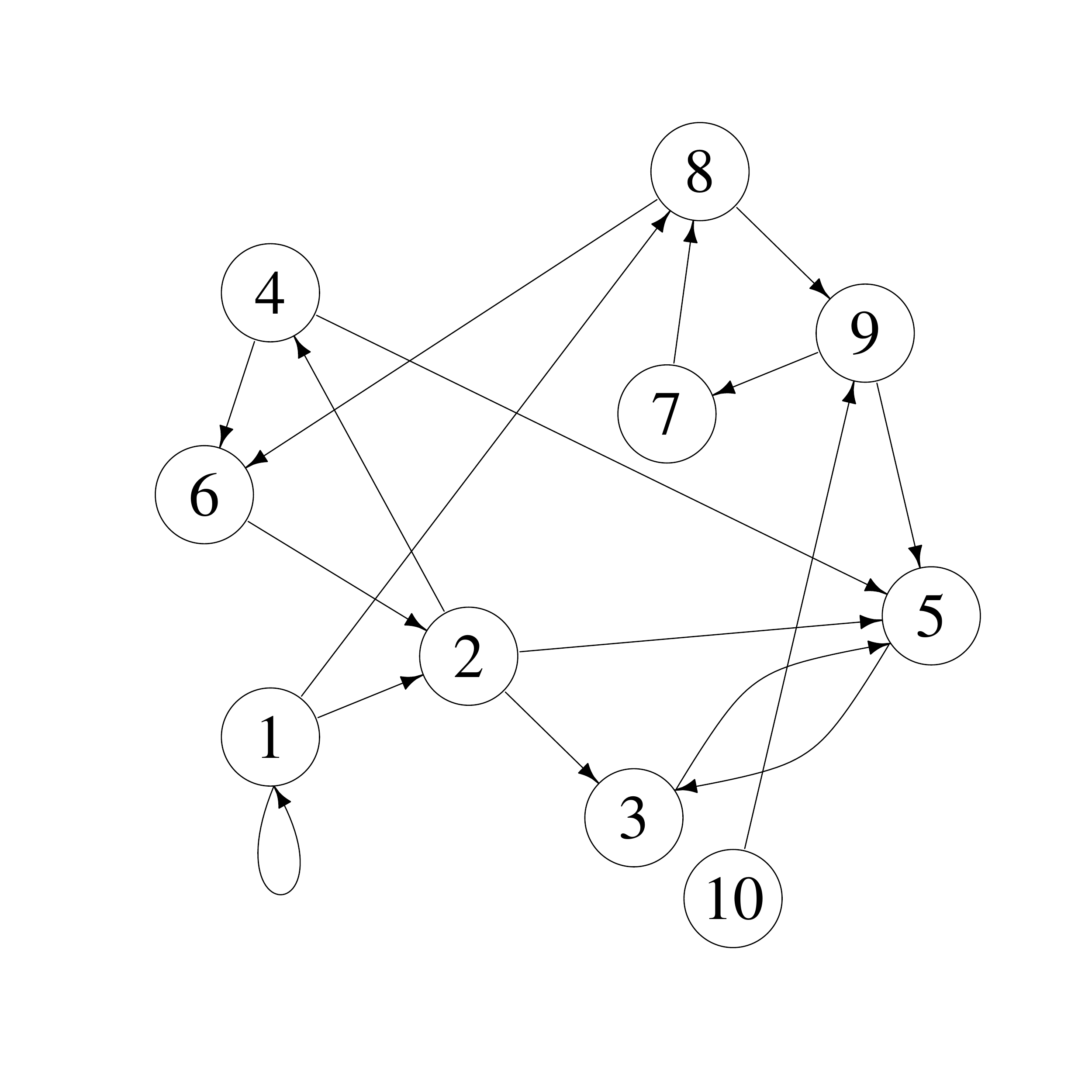}
\caption{Hawkes process simulation, Hawkes graph, and estimated Hawkes skeleton.\
The left graph represents the Hawkes graph corresponding to the Hawkes process example from Section \ref{example_model}; the graph is a summary of the immigration and branching structure of the model: edges from one vertex to another vertex denote nonzero reproduction intensities, respectively, excitement. Fat edges refer to heavy excitement (1.5 expected children events in branching construction); thin edges to small excitement (0.5 expected children) and the dotted line refers to a very small excitement (0.1 expected children); see~\eqref{reproduction_intensities}. Large vertices correspond to nonzero immigration-intensities ($=1$) and small vertices to the zero-immigration vertices; see \eqref{immigration_intensities}. The barcode plots illustrate a 30 time-units window of a simulated realization of the model (after some burn-in): we observe events of ten types, respectively, in ten components. One goal of our paper is to retrieve the graph on the left from such a realization. As a first step towards this aim, we calculate the Hawkes-skeleton estimate from Definition~\ref{def:graph_estimator} with respect to a coarse bin size $\Delta_{\text{skel}} = 1$ and a sparseness parameter $\alpha_{\text{skel}} = 0.05$. The right graph illustrates such an estimate. This skeleton will be used in a second step to retrieve the Hawkes-graph estimate; see Figure~\ref{fig2}. Comparing the skeleton with the true graph on the right, we see that we catch twelve of the thirteen true edges. We miss edge $(5,7)$. Furthermore, the estimate introduces five additional wrong edges $(1,8)$, $(2,5)$,$(8,6)$, $(9,5)$, and $(10,9)$. The three crucial points are: (i) These five false-positive edges \emph{do not introduce additional bias} in the graph estimation. (ii) Due to the coarse $\Delta_{\text{skel}}$-value, the calculation of the skeleton estimate is computationally simple. (iii) The resulting skeleton estimate is nearly as sparse as the true skeleton. This considerably reduces the complexity of the graph estimation (with a very fine $\Delta_{\text{graph}}$-parameter). See Figure~\ref{fig2}, for the Hawkes-graph estimation with respect to the skeleton estimate from above.}\label{fig1}
\end{center}
\end{figure}

We simulate $n_{\text{sim}} = 1000$ realizations of the Hawkes process $\bfN$ from Section~\ref{example_model}. We use the branching construction from Definitions~\ref{hawkes_family} and~\ref{def:hawkes_process} as simulation algorithm. In each realization, we simulate a time window of $500$ time units. This typically yields between 500 and 2000 events per component. Given each of these realized event streams, we calculate the Hawkes-skeleton estimator from Definition~\ref{skeleton_estimator}---with respect to different values of $\Delta_{\text{skel}}$ and $\alpha_{\text{skel}}$. Given these skeleton estimates, we calculate the Hawkes-graph estimator from Definition~\ref{def:graph_estimator}---including confidence bounds for all vertex and edge weights. Finally, we analyze the scatterplots for branching-intensity estimates, choose parametric function families, and fit the parameters on the estimates by nonlinear least squares. Figures~\ref{fig1} and~\ref{fig2} illustrate the procedure.

\begin{table}
\footnotesize
\begin{center}
\caption{$\Delta_{\text{skel}} = 0.5$}
\label{table2}
\begin{tabular}{rrrrrrr}
  \hline
alpha.skel & nedges & total & heavy & light & super.light & zero \\ 
  \hline
0.005 & 12.353 & 0.902 & 1.000 & 0.957 & 0.120 & 0.993 \\ 
  0.010 & 13.118 & 0.917 & 1.000 & 0.971 & 0.179 & 0.986 \\ 
  0.050 & 17.255 & 0.945 & 1.000 & 0.990 & 0.375 & 0.943 \\ 
  0.100 & 21.952 & 0.959 & 1.000 & 0.995 & 0.514 & 0.891 \\ 
  0.250 & 34.805 & 0.980 & 1.000 & 0.999 & 0.745 & 0.746 \\ 
   \hline
\end{tabular}

\end{center}
\end{table}

\begin{table}
\footnotesize
\begin{center}
\caption{$\Delta_{\text{skel}} = 1$}
\label{table3}
\begin{tabular}{rrrrrrr}
  \hline
alpha.skel & nedges & total & heavy & light & super.light & zero \\ 
  \hline
0.005 & 12.476 & 0.910 & 1.000 & 0.967 & 0.129 & 0.993 \\ 
  0.010 & 13.171 & 0.921 & 1.000 & 0.977 & 0.178 & 0.986 \\ 
  0.050 & 17.264 & 0.949 & 1.000 & 0.993 & 0.400 & 0.943 \\ 
  0.100 & 21.806 & 0.962 & 1.000 & 0.997 & 0.535 & 0.893 \\ 
  0.250 & 34.465 & 0.979 & 1.000 & 0.999 & 0.730 & 0.750 \\ 
   \hline
\end{tabular}

\end{center}
\end{table}

\begin{table}
\footnotesize
\begin{center}
\caption{$\Delta_{\text{skel}} = 2$}
\label{table4}
\begin{tabular}{rrrrrrr}
  \hline
alpha.skel & nedges & total & heavy & light & super.light & zero \\ 
  \hline
0.005 & 12.244 & 0.810 & 1.000 & 0.828 & 0.074 & 0.980 \\ 
  0.010 & 13.680 & 0.846 & 1.000 & 0.876 & 0.119 & 0.969 \\ 
  0.050 & 19.709 & 0.913 & 1.000 & 0.957 & 0.262 & 0.910 \\ 
  0.100 & 25.065 & 0.936 & 1.000 & 0.978 & 0.369 & 0.852 \\ 
  0.250 & 38.186 & 0.966 & 1.000 & 0.994 & 0.605 & 0.705 \\ 
   \hline
\end{tabular}

\end{center}
\end{table}
\subsubsection*{Hawkes-skeleton estimation} We fix $s = 5$ and, for each simulated event-stream, we calculate the Hawkes-skeleton estimates from Definition~\ref{skeleton_estimator} with respect to this support parameter $s$, bin sizes $\Delta_{\text{skel}}\in\{0.2, 0.5,1,2\}$, and various sparseness parameters $\alpha_{\text{skel}}\in\{0.005, 0.01, 0.05, 0.1, 0.25\}$. We denote the estimated edge sets by $\{\widehat{\mathcal{E}}^*(k)\}_{k = 1,2,\dots,n_{\text{sim}}}$ and the true edge set by ${\mathcal{E}}^*$. Using this notation, we summarize the results of the simulation study in Tables~\ref{table1},~\ref{table2},~\ref{table3}, and~\ref{table4} with the following statistics:
 \begin{enumerate}[label = \roman*)]
\item {\it nedges}:\  average size of estimated edge-sets (true number is 13), that is, $\sum_{k = 1}^{n_{\text{sim}}}|\mathcal{E}^*(k)|/n_{\text{sim}}$.
\item\label{item:total} {\it total}:\ fraction of correctly included edges, i.e, of pairs $(i,j)\in \widehat{\mathcal{E}}^*_{\bfN}(k)$ such that $(i,j)\in{\mathcal{E}}_{\bfN}$:
$$
\frac{\sum_{k = 1}^{n_{\text{sim}}}\sum_{(i,j)\in {\mathcal{E}}^*}1_{\{(i,j)\in\widehat{\mathcal{E}}^*(k)\}}}{n_{\text{sim}}|\mathcal{E}^*|}.
$$
Note that $1- total$ is the \emph{false-negative rate}.
\item {\it heavy/light/super.light}: more detailed version of \ref{item:total} above; fractions of correctly estimated edges with heavy (1.5), light (0.5) and super-light (0.1) edge weights.
\item {\it zero}: fraction of correctly excluded edges, i.e., of pairs $(i,j)\notin \widehat{\mathcal{E}}^*_{\bfN}(k)$ such that $(i,j)\notin{\mathcal{E}}_{\bfN}$:
$$
\frac{\sum_{k = 1}^{n_{\text{sim}}}\sum_{(i,j)\notin{\mathcal{E}}^*}1_{\{(i,j)\notin\widehat{\mathcal{E}}^*(k)\}}}{n_{\text{sim}}\big(d^2 -|\mathcal{E}^*|\big)}.
$$
Note that $1- zero$ is the \emph{false-positive rate}.
\end{enumerate}
First, we discuss the estimations with respect to bin size $\Delta_{\text{skel}} = 0.2$; see Table~\ref{table1}. We note from the last column, \emph{zero}, that the false-positive rate is indeed very close to the value of the chosen theoretical significance level $\alpha_{\text{skel}}$. Going back to Definition~\ref{skeleton_estimator}, we see that the larger $\alpha_{\text{skel}}$, the more edges are included in the Hawkes-skeleton estimation. This is reflected in all of the columns. However, even for very small $\alpha_{\text{skel}}$, we detect \emph{all} of the edges with a heavy edge weight and most of the edges with light edge weight. The edge $(5,7)$ with the super-light weight (0.1) is obviously a hard-to-detect alternative to the zero hypothesis. Note that Tables~\ref{table2}, \ref{table3}, and \ref{table4} look roughly the same as Table~\ref{table1} one above---though the estimates were calculated with respect to completely different bin sizes $\Delta_{\text{skel}}$.  So, in this first estimation step, we may use a very coarse bin size $\Delta_{\text{skel}}$. This makes the calculations underlying the skeleton estimation feasible even for much higher dimensions.\\
\par
The main purpose of the skeleton estimation is to lay the ground for the graph estimation which itself depends on a given estimated skeleton; see Definition~\ref{def:graph_estimator}. Missing edges in the skeleton estimate will typically introduce a bias for the graph-weight estimates. We therefore want to keep the false-negative rate ($ = 1 - \emph{total}$) in the skeleton estimation very small. As a consequence, we need $\alpha_{\text{skel}}$ large to include more edges.
Note that false-positive edges do \emph{not} add additional bias in the graph estimation; see Section~\ref{estimation_of_the_hawkes_graph}. So the increase of the false-positive rate (that is, the decrease in the \emph{zero}-column) does not prevent us from increasing the $\alpha_{\text{skel}}$-parameter. Note, however, that the whole reason  of the two-step estimation procedure is that in the first step we want to take advantage of the sparseness of the underlying true Hawkes graph and \emph{reduce} the complexity of the a priori fully connected network. Too many additional false-positive edges would hamper this advantage. In this sense, not only $\Delta_{\text{skel}}$ but also $\alpha_{\text{skel}}$ can be understood as a parameter controlling the numerical complexity of the method: the smaller $\alpha_{\text{skel}}$, the sparser the estimated skeleton, the less complex the computations for the Hawkes-graph estimate from Definition~\ref{def:graph_estimator}.
We see in our tables that, for all choices of $\Delta_{\text{skel}}$ and all values of $\alpha_{\text{skel}}$, we typically catch all the true edges, i.e., the false-negative rate is really small. In the next section, we will see that the graph estimates are not dramatically sensitive to the $\alpha_{\text{skel}}$ parameter in the skeleton estimation.

\subsubsection*{Hawkes-graph estimation}
In a further step, we quantify the estimated excitements. That is, given a Hawkes skeleton, we estimate the corresponding graph as in Definition~\ref{def:graph_estimator}; see Figure~\ref{fig2}.
\begin{figure}
\begin{center}
\includegraphics[width = 0.6\textwidth]{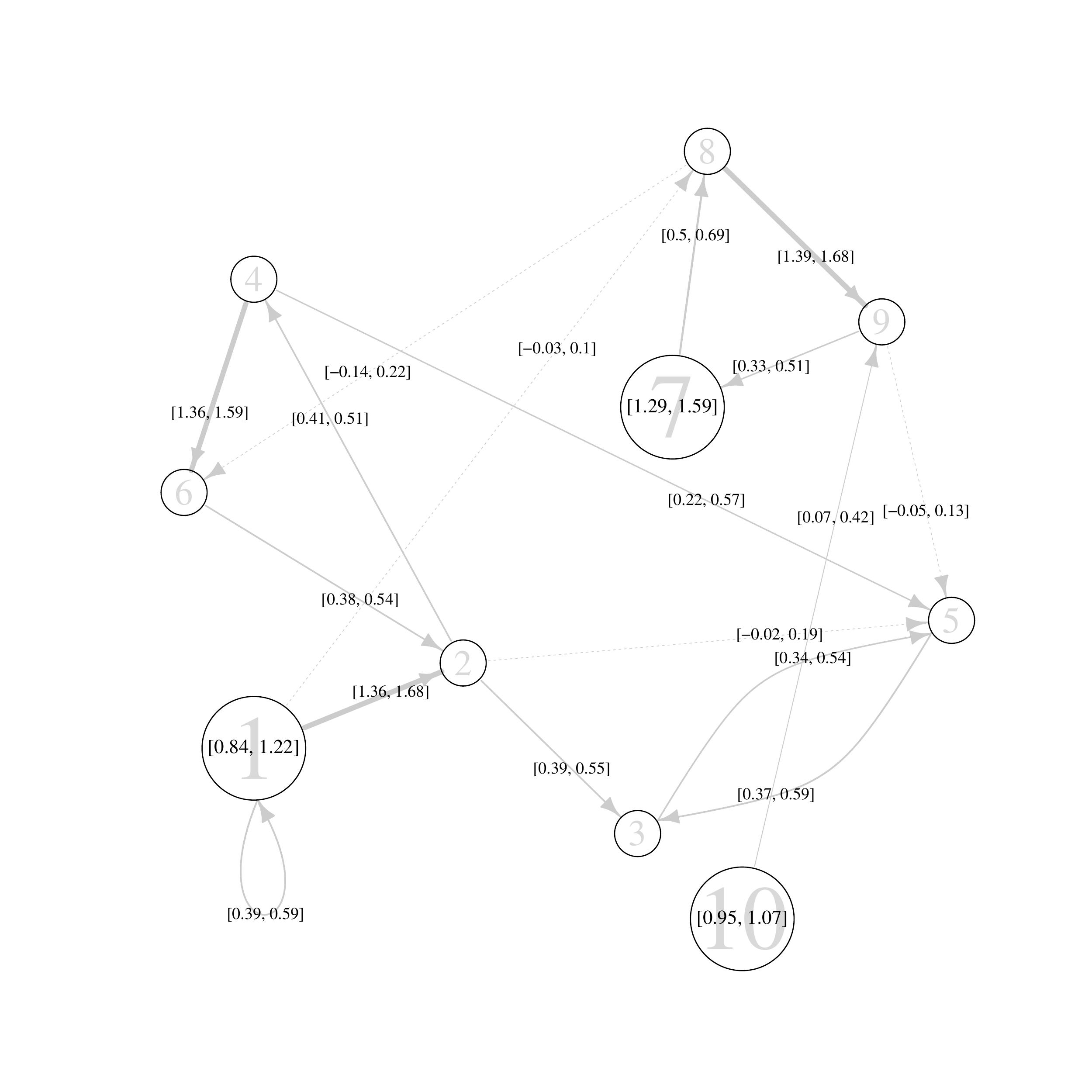}
\caption{Hawkes-graph estimation. Given a single simulation of length $T=1000$ from the example Hawkes process in Section~\ref{example_model}, we calculate the Hawkes-graph estimator from Definition~\ref{def:graph_estimator} with respect to the Hawkes-skeleton estimation from Figure~\ref{fig1}; we apply a bin size $\Delta_{\text{graph}} = 0.025$ and 
a significance parameter $\alpha_{\text{graph}}= 0.05$. This calculation allows us to supply each vertex and each node from this estimated skeleton with confidence intervals for their weights in the corresponding Hawkes graph. The edge widths in the illustration are chosen proportional to the estimated edge weights. Estimated edge weights that are not significantly larger than zero are illustrated as a dashed edge. Similarly, vertices where the confidence interval for the vertex weight contains 0 are plotted as smaller circles---the corresponding confidence bounds are left away in this latter case.
Comparing the results with the true Hawkes graph in Figure~\ref{fig1}, respectively, with the Hawkes process parametrization in \eqref{immigration_intensities} and \eqref{reproduction_intensities}, we see that for all correct edges, the true weights are covered by the confidence intervals. And for the wrong, additional edges from the skeleton estimation $(1,8)$, $(2,5)$,$(8,6)$, and $(9,5)$, we see that their weights are not significantly different from zero ($\alpha_{\text{graph}} = 0.05$). The estimated edge weight for the wrong edge $(10,9)$ is significantly larger than zero but still small. All true vertex weights but the weight of vertex 7 are also covered by the confidence intervals. The weight of vertex 7 is overestimated because we missed the (light) edge $(5,7;0.1)$ in the skeleton estimation; this missing explanatory variable for the events in component 7 is compensated by an extra large vertex weight in the graph estimation. Deleting all insignificant (in figure dashed) edges and setting the vertex weight of the insignificant (in figure small) vertex-weights to zero, we recover the original underlying graph almost perfectly. 
}\label{fig2}
\end{center} 
\end{figure}
We do this both with respect to the true skeleton and with respect to the estimated skeletons from the first estimation step. For comparison, we apply skeletons that were estimated with respect to different $\alpha_{\text{skel}}$-parameters. However, we only consider the skeletons that were estimated with respect to the (rough) bin size $\Delta_{\text{skel}} = 1$. As opposed to the skeleton estimation, we may now use a much smaller bin size  $\Delta_{\text{graph}} = 0.1$ for the graph estimation. In the present example, this is approximately the lower bin-size bound for tolerable computing time for the simulation study using a 2.3 GHz Intel Core processor (about 10sec for each of the estimations, no parallelization). 
Furthermore, we apply $s = 5$ and $\alpha_{\text{graph}} = 0.05$ in the calculation. For each simulation, we also calculate the confidence bounds for all vertex and edge weights from Definition~\ref{def:graph_estimator}. Table~\ref{coverage} reports the coverage rates.

\begin{table}\footnotesize
\begin{center}\caption{$\Delta_{\text{graph}} = 0.1$ and $\alpha_{\text{graph}} = 0.05$}
\begin{tabular}{lrr}
  \hline
applied.skeleton & vertex.weight.coverage & edge.weight.coverage \\ 
  \hline
alpha.skel = 0.005 & 0.859 & 0.907 \\ 
  alpha.skel = 0.01 & 0.867 & 0.904 \\ 
  alpha.skel = 0.05 & 0.896 & 0.893 \\ 
  alpha.skel = 0.1 & 0.907 & 0.900 \\ 
  alpha.skel = 0.25 & 0.915 & 0.932 \\ 
  true skeleton & 0.947 & 0.943 \\ 
   \hline
\end{tabular}
\label{coverage}
\end{center}
\end{table}
The coverage rates of the graph estimations that were calculated with respect to the true underlying skeleton correspond well with the significance parameter $\alpha_{\text{graph}}=0.05$. Naturally, the coverage rates for the estimates with respect to the estimated skeleton are smaller: as soon as the estimated skeleton misses an edge (e.g., the super-light edge $(5,7;0.1)$), the model calibration balances this missing possibility of excitement by increased baseline intensities or increased edge weights. The larger $\alpha_{\text{skel}}$, the lower the probabilty of missing an edge, the better the coverage rates. Note, however, that at the same time, the corresponding skeleton estimate becomes increasingly dense and with it the graph estimation becomes increasingly time-consuming.

\subsubsection*{Parametric reproduction intensity estimation}
Finally, we check how the various excitements are distributed over time. As examples, we examine the reproduction intensity $h_{1,2}$.
From the calculation of the Hawkes-graph estimate, we retrieve estimates of the reproduction intensity values on an equidistant grid; see \eqref{estimator_interpretation2}.
Based on the scatter plots of these estimates, we choose appropriate parametrized function families. Given such parametric functions, the parameters are fit to the pointwise estimates via nonlinear least squares; see Figure~\ref{fig3}. QQ-plots (not included) support asymptotic normality for the parameter estimates.
\begin{figure}
\begin{center}
\includegraphics[width = 0.4\textwidth]{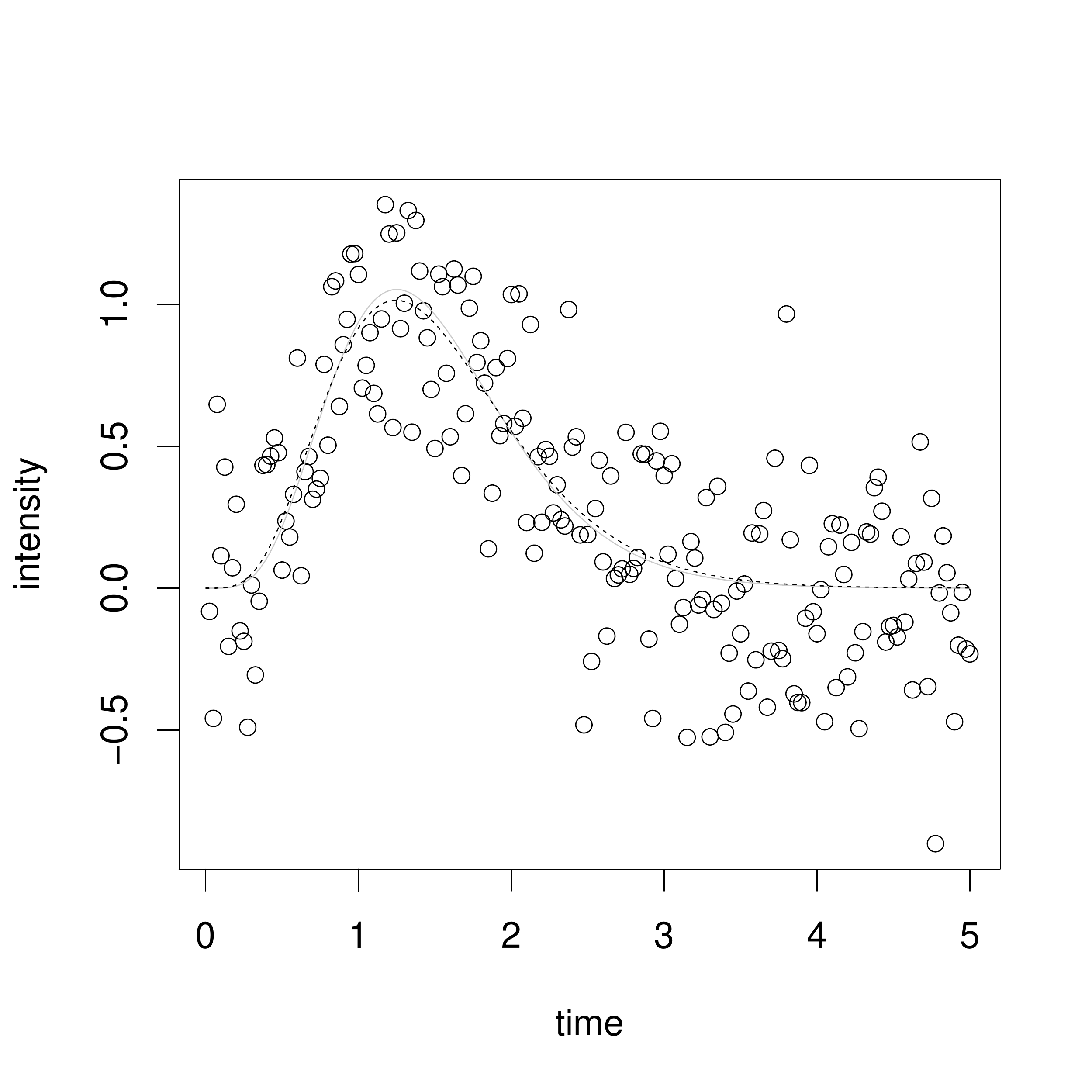}
\includegraphics[width = 0.4\textwidth]{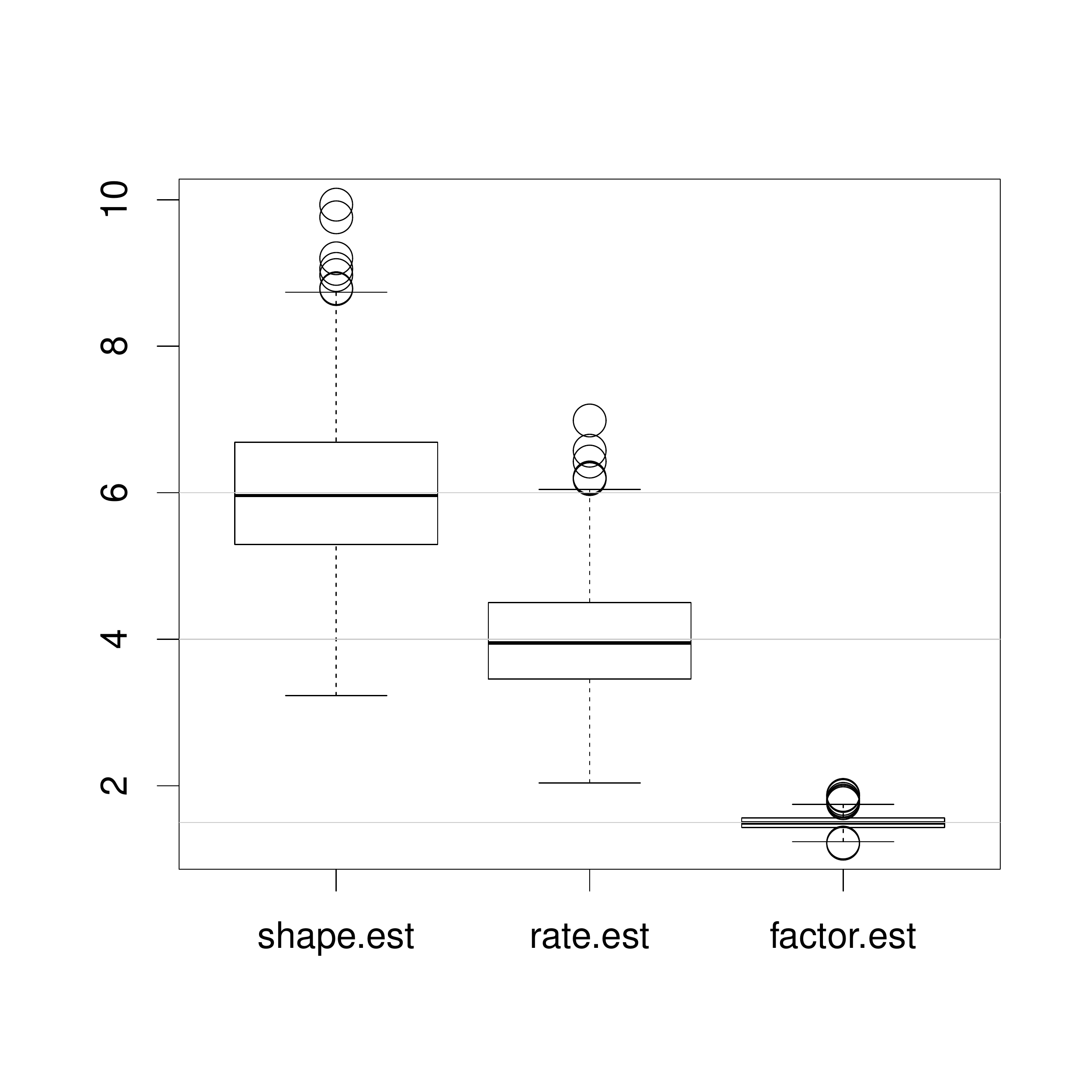}
\caption{Parametric estimation. \emph{Left:} From a single realization of the example Hawkes model from Section~\ref{example_model} with length $T = 1000$, we calculate Hawkes-skeleton and Hawkes-graph estimates from Definitions~\ref{skeleton_estimator} and~\ref{def:graph_estimator}; see Figures~\ref{fig1} and~\ref{fig2}. As a by-product of these calculations, we retrieve pointwise estimates (circles) for the values of the reproduction intensity $h_{1,2}$ on an equidistant grid; see \eqref{estimator_interpretation2}. From these estimates, one may guess that $h_{1,2}(t) = a_{1,2} \gamma(t)$, where $\gamma$ is a Gamma density depending on a shape and on a rate parameter. We fit the three parameters by nonlinear least squares as described in Section~\ref{excitement_function_estimation}. The dotted black line refers to the corresponding estimated parametric function. It catches the true underlying function (grey solid line) quite well; see \eqref{reproduction_intensities}. \emph{Right:} We apply this parametric estimation of $h_{1,2}$ on 1000 independent realizations of length $T = 500$. The boxplots collect the parameter estimates for each of the 1000 estimations of the simulation study. The grey marks refer to the corresponding true values. Eyeball-examination shows that the estimates are remarkably symmetric distributed and unbiased. QQ-plots (not illustrated) support asymptotic normality.
}\label{fig3}
\end{center}\end{figure}

\section{Conclusion}\label{conclusion}
The Hawkes graph and the Hawkes skeleton describe the immigration and branching structure of a Hawkes process in a graph-theoretical framework. We demonstrate how graph terminology can be very useful for multivariate Hawkes processes. Combining the new concepts with an estimation procedure from earlier work, we develop a statistical estimation method for the Hawkes skeleton and the Hawkes graph. The key idea is that in a preliminary step we only test if there is \emph{at all} excitement from any vertex to another vertex. We show that this first step is relatively simple to implement. The knowledge of the Hawkes skeleton makes the second step, the estimation of the Hawkes graph, much more efficient---both from a computational and statistical point of view. The simulation study shows that the procedure works as desired.  As long as the true underlying graph is sparse (e.g., if the typical number of parents of a node is not larger than 5 and does not depend on the dimension of the process) the approach may be applied in even higher-dimensional situations. In any case, the method may be a useful tool for preliminary analysis when examining large multi-type event-stream data in the Hawkes framework. \\
\par
It might be worthwile to study the distributional properties of the parameter estimates from Section~\ref{excitement_function_estimation} in more detail. Also note that the graph representation would also apply for discrete-time event-stream models, i.e., for multivariate time series of counts. More specifically, the present paper could have been developed in complete analogy for multivariate integer-valued autoregressive time series (INAR($\infty$)) which can be interpreted as discrete-time versions of the Hawkes process; see \cite{kirchner16c}. In this latter case, all  results that we apply in our paper would be valid without taking any discretization error into account. In any case, when applied to real data, the discretization error is \emph{not} the major drawback of our method: our method does indeed solve the important problem of how to decide whether an edge between two components exists at all. But for the specification of a Hawkes process we need to solve another---more important---issue. We want to be able to decide whether we observe a {\it complete} Hawkes graph or whether our data lack some non-redundant vertices! In particular, the method presented will also yield reasonable results for data stemming from models with no or less underlying `causality'. The seeming excitement can then be explained by a confounding factor that we do not observe (and ignore). We believe, in view of the widespread interpretation of the Hawkes model as a causal model (an interpretation we share), it would be of utmost importance to derive tests for the presence of such hidden confounding factors in the event-stream context.

\section*{Acknowledgements}
This research has been supported by the ETH RiskLab and the Swiss Finance Institute.
The authors wish to express their gratitude to all the {\tt R}-programmers providing and maintaining powerful statistical software. For our work, the igraph package \citep{csardi06} and the Matrix package \citep{maechler12} have been particularly helpful. We thank Vladimir Ulyanov for his comments on an earlier version of the paper which helped to improve the presentation. We also thank Philippe Deprez for a fertile discussion about Theorem~\ref{prop:graph_subcriticality}.
\bibliographystyle{apalike}
\bibliography{/Users/matthiaskirchner/Desktop/ETH/Diss/Bibliographies/diss.biblio}

\begin{thebibliography}{}

\bibitem[Bacry et~al., 2015]{bacry15}
Bacry, E., Ga{\"i}ffas, S., and Muzy, J. (2015).
\newblock {A generalization error bound for sparse and low-rank multivariate
  {H}awkes processes}.
\newblock {\em arXiv:1501.00725}.

\bibitem[Bates and Maechler, 2015]{maechler12}
Bates, D. and Maechler, M. (2015).
\newblock Matrix: Sparse and dense matrix classes and methods.
\newblock {\em R package version 1.1-5}.
\newblock \url{http://CRAN.R-project.org/package=Matrix}.

\bibitem[Csardi and Nepusz, 2006]{csardi06}
Csardi, G. and Nepusz, T. (2006).
\newblock The igraph software package for complex network research.
\newblock {\em InterJournal}, Complex Systems:1695.

\bibitem[Daley and Vere-Jones, 2003]{daley03}
Daley, D. and Vere-Jones, D. (2003).
\newblock {\em An Introduction to the Theory of Point Processes}, volume I and
  II.
\newblock Springer, New York, 2nd edition.

\bibitem[Delattre et~al., 2015]{delattre15}
Delattre, S., Fournier, N., and Hoffmann, M. (2015).
\newblock Hawkes processes on large networks.
\newblock {\em PNAS}, 105(41).

\bibitem[Gunawardana et~al., 2014]{gunawardana14}
Gunawardana, A., Meek, C., and Xu, P. (2014).
\newblock A model for temporal dependencies in event streams.
\newblock {\em Microsoft Research}.

\bibitem[Haccou et~al., 2005]{jagers05}
Haccou, P., Jagers, P., and Vatutin, V. (2005).
\newblock {\em Branching Processes}.
\newblock Cambridge University Press, Cambridge.

\bibitem[Hall and Willett, 2016]{hall16}
Hall, E. and Willett, R. (2016).
\newblock {Tracking dynamic point processes on networks}.
\newblock {\em IEEE Transactions on {I}nformation {T}heory}, 62(7):4327--4346.

\bibitem[Hawkes, 1971a]{hawkes71a}
Hawkes, A. (1971a).
\newblock Point spectra of some mutually-exciting point processes.
\newblock {\em Journal of the Royal Statistical Society: Series B},
  33:438--443.

\bibitem[Hawkes, 1971b]{hawkes71b}
Hawkes, A. (1971b).
\newblock Spectra of some self-exciting and mutually-exciting point processes.
\newblock {\em Biometrika}, 58:83--90.

\bibitem[Hawkes, 1974]{hawkes74}
Hawkes, A. (1974).
\newblock A cluster representation of a self-exciting point process.
\newblock {\em Journal of Applied Probability}, 11:493--503.

\bibitem[Hawkes, 1968]{hawkes68}
Hawkes, T. (1968).
\newblock On the class of the {S}ylow tower groups.
\newblock {\em Mathematische Zeitschrift}, 105:393--398.

\bibitem[Kirchner, 2016a]{kirchner16d}
Kirchner, M. (2016a).
\newblock An estimation procedure for the {H}awkes process.
\newblock {\em Quantitative Finance}.
\newblock (to appear).

\bibitem[Kirchner, 2016b]{kirchner16c}
Kirchner, M. (2016b).
\newblock Hawkes and {I}{N}{A}{R}($\infty$) processes.
\newblock {\em Stochastic Processes and their Applications}, 162:2494--2525.

\bibitem[Liniger, 2009]{liniger09}
Liniger, T. (2009).
\newblock {\em Multivariate {H}awkes {P}rocesses}.
\newblock Ph{D} thesis, ETH Zurich.

\bibitem[Meek, 2014]{meek14}
Meek, C. (2014).
\newblock Toward learning graphical and causal process models.
\newblock {\em Microsoft Research}.

\bibitem[Ogata, 1988]{ogata88}
Ogata, Y. (1988).
\newblock Statistical models for earthquake occurences and residual analysis
  for point processes.
\newblock {\em Journal of the American Statistical Association}, 83(401):9--27.

\bibitem[Pearl, 2009]{pearl09}
Pearl, J. (2009).
\newblock {\em Causality: Models, Reasoning, and Inference}.
\newblock Cambridge University Press, Cambridge, 2nd edition.

\bibitem[Shi, 2015]{shi15}
Shi, Z. (2015).
\newblock {\em {B}ranching {R}andom {W}alks}, volume 2151 of {\em {L}ecture
  {N}otes in {M}athematics}.
\newblock Springer, Cham.

\bibitem[Song et~al., 2013]{song13}
Song, L., Zha, H., and Zhou, K. (2013).
\newblock Learning social infectivity in sparse low-rank networks using
  multi-dimensional {H}awkes processes.
\newblock {\em Proceedings of the 16th International Conference on Artificial
  Intelligence and Statistics (AISTATS)}, 31.

\bibitem[Watson, 2015]{watson15}
Watson, C. (2015).
\newblock The geometric series of a matrix.
\newblock \url{http://www.math.uvic.ca/~dcwatson/work/geometric.pdf}.

\end{thebibliography}
\end{document}